\definecolor{darkgreen}{rgb}{0,.35,0}
\definecolor{darkblue}{rgb}{0,0,.5}
\definecolor{darkred}{rgb}{.6,0,0}
\newcommand{\ZZ}{{\mathbb{Z}}} 
\newcommand{\NN}{{\mathbb{N}}} \newcommand{\KK}{{\mathbb{K}}}
\newcommand{\uX}{{\underline{X}}}
\newcommand{\uD}{{\underline{D}}}
\newcommand{\undlnn}{{\underline{n}}}
\newcommand{\uTheta}{{\underline{\theta}}}
\newcommand{\uZero}{{\underline{0}}}
\newcommand{\lm}{{\mathrm{lm}}}
\newcommand{\lexp}{{\mathrm{lexp}}}
\newtheorem{theorem}{Theorem}
\newtheorem{definition}{Definition}
\newtheorem{example}{Example}
\newtheorem{lemma}{Lemma}
\newtheorem{remark}{Remark}
\newtheorem{corollary}{Corollary}
\newtheorem{proposition}{Proposition}
\begin{document}

\setlength{\belowdisplayskip}{1pt} \setlength{\belowdisplayshortskip}{0pt}
\setlength{\abovedisplayskip}{1pt}
\setlength{\abovedisplayshortskip}{0pt}
\title{Factoring Differential Operators in {\huge{$n$}} Variables}

\numberofauthors{3} \author{
\alignauthor Mark Giesbrecht\\ \affaddr{David R. Cheriton School of
  Computer Science}\\ \affaddr{ University of Waterloo}\\ \affaddr{200
  University Avenue West}\\ \affaddr{Waterloo, Ontario,
  Canada}\\ \email{mwg@uwaterloo.ca}
\alignauthor Albert Heinle\\ \affaddr{David R. Cheriton School of
  Computer Science}\\ \affaddr{ University of Waterloo}\\ \affaddr{200
  University Avenue West}\\ \affaddr{Waterloo, Ontario,
  Canada}\\ \email{aheinle@uwaterloo.ca}
\alignauthor Viktor Levandovskyy\\ \affaddr{Lehrstuhl D f\"ur
  Mathematik, RWTH Aachen University}\\ \affaddr{Templergraben
  64}\\ \affaddr{Aachen,
  Germany}\\ \email{viktor.levandovskyy@math.rwth-aachen.de} }

\date{\today}

\maketitle
\begin{abstract}
In this paper, we present a new algorithm and an experimental
implementation for factoring elements in the polynomial $n$th Weyl
algebra, the polynomial $n$th shift algebra, and $\ZZ^n$-graded
polynomials in the $n$th $\underline{q}$-Weyl algebra.

The most unexpected result is that this noncommutative problem of
factoring partial differential operators can be approached effectively
by reducing it to the problem of solving systems of polynomial
equations over a commutative ring. In the case where a given
polynomial is $\ZZ^n$-graded, we can reduce the problem completely
to factoring an element in a commutative multivariate polynomial
ring.

The implementation in \textsc{Singular} is effective on a broad range of
polynomials and increases the ability of computer algebra systems to
address this important problem.  We compare the
performance and output of our algorithm with other implementations
in commodity computer algebra systems on  nontrivial examples.
\end{abstract}
\category{G.4}{Mathematical Software}{Algorithm design and analysis}
\category{I.1.2}{Symbolic and Algebraic Manipulation}{Algorithms}[Factorization]\vspace*{-10pt}
\terms{Algorithms, Design, Theory}\vspace*{-10pt}
\keywords{Factorization, linear partial differential operator,
  \linebreak non-commutative algebra, Singular, algebra of operators, Weyl algebra}

\section{Introduction}
In this paper we present a new method and an implementation for
factoring elements in the $n$th polynomial Weyl algebra $A_n$ and the
$n$th polynomial shift algebra. An adaptions of these ideas can also
be applied to classes of polynomials in the $n$th $\underline{q}$-Weyl
algebra, which is also outlined here.  There are numerous important
applications for this method, notably since one can view those rings
as operator algebras. For example, given an element $L\in A_n$ and
viewing $L$ as a differential operator, one can derive
properties of its solution spaces. Especially concerning the problem
of finding the solution to the differential equation associated with
$L$, the preconditioning step of factoring $L$ can help to reduce the
complexity of that problem in advance.

The new technique heavily uses the nontrivial $\ZZ^n$-grading on $A_n$
and, to the best of our knowledge, has no analogues in the literature
on factorizations for $n\geq 2$.
However, for $n=1$ it is the same grading that lies behind the
Kashiwara-Malgrange $V$-filtration (\cite{Kashiwara:1983} and
\cite{Malgrange:1983}), which is a very important tool in the
$D$-module theory.  Van Hoeij also made use of this technique in
\cite{van1997formal} to factorize elements in the first Weyl algebra
with power series coefficients.  Notably, for $n \geq 2$, the
$\ZZ^n$-grading we propose is very different from the mentioned
$\ZZ$-grading. Among others, a recent result from
\cite{andres2013noncommutative} states that the Gel'fand-Kirillov
dimension \cite{gelfand1966corps} of the $0$\emph{th} graded part of
$\ZZ$-grading of $A_n$ is in fact $2n-1$. The Gel'fand-Kirillov
dimension of the whole ring $A_n$ is, for comparison, $2n$. The
$\uZero$\emph{th} graded part of the $\ZZ^n$-grading we propose has
Gel'fand-Kirillov dimension $n$.

\vspace*{-7pt}
\begin{definition}
  Let $A$ be an algebra over a field $\KK$ and $f\in A\setminus \KK$
  be a 
  polynomial. A {\bf nontrivial factorization} of $f$ is a tuple
  $(c,f_1,\ldots,f_m)$, where $c \in
  \KK\setminus\{0\}$, $f_1,\ldots,f_m \in A\setminus\{1\}$ are monic polynomials and
  $f= c \cdot f_1 \cdots f_m.$
\end{definition}
\vspace*{-7pt}
In general, we identify two problems in noncommutative factorization
for a given polynomial $f$: (i) finding one factorization of $f$, and
(ii) finding all possible factorizations  of~$f$.
Item (ii) is interesting since factorizations in noncommutative rings
are not unique in the classical sense (i.e., up to multiplication by a
unit), and regarding the problem of solving the associated
differential equation one factorization might be more useful than
another. We show how to approach both problems here.

A number of papers and implementations have been published in the
field of factorization in operator algebras over the past few
decades. Most of them concentrated on linear differential operators
with rational coefficients.  Tsarev has studied the form, number and
properties of the factors of a differential operator in
\cite{Tsarev:1994} and \cite{Tsarev:1996}, which extends the papers
\cite{Loewy:1903} and \cite{Loewy:1906}.  A very general approach to
noncommutative algebras and their properties, including factorization,
is also done in the book by Bueso et al. in \cite{Bueso:2003}. The
authors provide several algorithms and introduce various
points of views when dealing with noncommutative polynomial \mbox{algebras.}

In his dissertation van Hoeij \cite{Hoeij:1996} developed an algorithm
to factorize a univariate differential operator. Several papers following that
dissertation extend these techniques \cite{Hoeij:1997,
  van1997formal, Hoeij:2010}, and this algorithm is implemented in the
\texttt{DETools} package of \textsc{Maple} \cite{Maple} as the
standard algorithm for factorization of these operators.

In the \textsc{REDUCE}-base computer algebra system \textsc{ALLTYPES},
Schwarz and Grigoriev \cite{Schwarz:2009} have implemented the
algorithm for factoring differential operators they introduced in
\cite{GrigorievSchwartz:2004}.  As far as we know, this implementation
is solely accessible as a web service.  Beals and Kartashova
\cite{beals2005constructively} consider the problem of factoring
polynomials in the second Weyl algebra, where they are able to deduce
parametric \mbox{factors.}

For special classes of polynomials in operator algebras, Foupouagnigni
et al. \cite{foupouagnigni2004factorization} show some unexpected
results about factorizations of fourth-order differential equations
satisfied by certain Laguerre-Hahn orthogonal
polynomials.

From a more algebraic point of view, and dealing only with strictly
polynomial noncommutative algebras, Melenk and Apel \cite{Melenk:1994}
developed a package for the computer algebra system
\textsc{REDUCE}. That package provides tools to deal with
noncommutative polynomial algebras and also contains a factorization
algorithm for the supported algebras. It is, moreover, the only tool
besides our implementation in \textsc{Singular} \cite{Singular:2012}
that is capable of factoring in operator algebras with more than one
variable. Unfortunately, there are no further publications about how
the implementation works besides the available code.

The above mentioned algorithms and implementations are very well written and
they are able to factorize a large number of polynomials.
Nonetheless, as pointed out in \cite{HeinleLev:2011,  heinle2013factorization}, there exists a large class of polynomials,
even in the first Weyl algebra, that seem to form the worst case for
those algorithms.
This class is namely the graded (or homogeneous) polynomials in the
sense of the $\ZZ^n$-graded structure on the $n$th Weyl algebra. Using
our techniques, we are able to obtain a factorization very quickly
utilizing commutative factorization and some combinatorics. Those
techniques are discussed for the first ($\underline{q}$-)Weyl algebra
in detail in \cite{heinle2013factorization}.

Factorization of a general non-graded polynomial is much more
involved.  The main idea lies in inspecting the highest and the lowest
graded summands of the polynomial to factorize. Any factorization
corresponds respectively to the highest or the lowest graded
summands of the factors. Since the graded factorization appears to be
easy, we are able to factorize those summands and obtain finitely many
candidates for highest and lowest summands of the factors. Obtaining
the rest of the graded summands is the subject consider in this
paper.

An implementation dealing with the first Weyl algebra, the first shift
algebra, and graded polynomials in the first $q$-Weyl algebra, was
created by
Heinle and Levandovskyy within the computer algebra system
\textsc{Singular} \cite{Singular:2012}.  For the latter algebra, the
implementation in \textsc{Singular} is the only one available that
deals with $q$-Weyl algebras, to the knowledge of the
authors. The code has been distributed since version 3-1-3 of
\textsc{Singular} inside the library \texttt{ncfactor.lib}, and
received a major update in version 3-1-6.

The new approach described in this paper will soon be
available in an upcoming release of \textsc{Singular}.
There are new functions for factoring polynomials in the $n$th
polynomial Weyl algebra, homogeneous polynomials in the $n$th
polynomial $\underline{q}$-Weyl algebra and the $n$th polynomial shift
algebra.

The remainder of this paper is organized as follows. The rest of this
section is dedicated to providing basic notions, definitions and results
that are needed to describe our approach. Most of the results are
well-known, but have not been used for factorization until now.

Section \ref{sctn:factGraded} contains a methodology to deal with
the factorization problem for graded polynomials, while in
Section~\ref{sctn:factArb} we utilize this methodology to factorize
arbitrary polynomials in the $n$th Weyl algebra. 
In Section~\ref{sctn:conclusion} we evaluate our
experimental implementation on several examples in Section
\ref{sctn:implAndTime} and compare the results to other commodity
computer algebra systems.

\subsection{Basic Notions and Definitions}


By $\KK$ we always denote a field of characteristic zero (though some
of the statements also hold for some finite fields).  For notational
convenience we write $\undlnn$ for $\{1,\ldots,n\}$ and $\uTheta$
for $\theta_1,\ldots, \theta_n$ for $n \in \NN$ throughout.
\vspace*{-3pt}
\begin{definition}
  The \textbf{$n$th $\underline{q}$-Weyl algebra} $Q_{n}$ is defined~as
  \begin{align*}
    &Q_n := \mathbb{K}\biggl\langle  x_1, \ldots, x_n ,\partial_1, \ldots,
    \partial_n | \text{ for } (i,j) \in \undlnn\times
    \undlnn:\\ 
    &\partial_i x_j =
    \begin{cases}
      x_j \partial_i,  \text{ if } i \neq j\\ 
      q_i x_j \partial_i + 1,   \text{ if } i=j
    \end{cases}\hspace{-10pt} 
    \text{, }
    \partial_i \partial_j - \partial_j \partial_i = x_ix_j - x_j x_i
    = 0\biggr\rangle,
  \end{align*}
  where $q_1, \ldots, q_n$ are units in $\KK$.  For the special case
  where $q_1 = \dots = q_n = 1$ we have the \textbf{$n$th Weyl
    algebra}, which is denoted by $A_n$. For notational convenience,
  we write
  \(
  \uX^e \uD^w := x_1^{e_1}\cdots x_n^{e_n}\partial_1^{w_1} \cdots
  \partial_n^{w_n}
  \)
  for every monomial, where $e, w \in \NN_0^n$.
\end{definition}
\vspace*{-7pt}

\begin{definition}
  The \textbf{$n$th shift algebra} $\mathcal{S}_n$ is defined as
  \begin{align*}
    &\mathcal{S}_n := \mathbb{K}\biggl\langle  x_1, \ldots, x_n ,s_1,
    \ldots, s_n | \text{ for } (i,j) \in \undlnn\times
    \undlnn:\\
    &s_i x_j =
    \begin{cases}
      x_j s_i,  \text{ if } i \neq j\\ (x_j+1) s_i, \text{ if } i=j
    \end{cases}\hspace{-10pt} \text{, }  s_i s_j - s_j s_i = x_ix_j - x_j x_i = 0\biggr\rangle.
  \end{align*}
  For notational convenience, we write as above \linebreak
  \(
  \uX^e \underline{S}^w := x_1^{e_1}\cdots x_n^{e_n}s_1^{w_1} \cdots
  s_n^{w_n},
  \)
  where $e,w \in \NN_0^n$.
\end{definition}


\begin{remark}
\label{rem:OrderingOnZn}
Throughout this paper we view $\ZZ^n$, equipped with the
coordinate-wise addition, as an ordered 
monoid with respect to a total ordering $<$, compatible
with addition and satisfying the following
property: for any $z_1, z_2 \in \ZZ^n$, such that $z_2 < z_1$, 
the set $\{ w \in \ZZ^n, z_2 < w < z_1 \}$ is finite.
\end{remark}

The $n$th ($\underline{q}$-)Weyl algebra possesses a nontrivial
$\ZZ^n$-grading 
using the weight vector $[-w,w]$ for a $\underline{0} \neq w \in \ZZ^n$ on the elements
$x_1,\ldots, x_n, \partial_1, \ldots, \partial_n$. For simplicity, we
choose $w := [1, \ldots, 1]$. In what follows, $\deg$ denotes the
degree induced by this weight vector, that is $\deg(\uX^a \uD^b) := 
[b_1 - a_1, \ldots, b_n - a_n]$ for $a,b\in\NN_{0}^n$. Note, that a $\ZZ$-grading, arising from
$V$-filtration \cite{Kashiwara:1983, Malgrange:1983} prescribes 
to $\uX^a \uD^b$ the grade $\sum_{i=1}^n (b_i - a_i) \in \ZZ$.

We call a polynomial {\bf homogeneous} or {\bf graded}, if every summand is
weighted homogeneous with respect to the weight vector $[-w,w]$ as above.
\vspace*{-5pt}
\begin{example}
  In the second Weyl algebra $A_2$ one has
  \[
  \deg(x_1x_2\partial_1\partial_2) = \deg((\partial_1 x_1 +
  1)x_2\partial_2) = [0,0].
  \]
  The polynomial \( x_1\partial_1^2x_2 + x_1^4\partial_1^5x_2
  + \partial_1x_2 \) is homogeneous of degree $[1,-1]$. The monomials
  $x_1 \partial_2$, resp. $x_2 \partial_1$, have degrees $[-1,1]$,
  resp. $[1,-1]$, hence their sum is not homogeneous.
\end{example}


Note that the so-called \textbf{Euler operators} $\theta_i :=
x_i\partial_i, i\in \undlnn$ have degree $\underline{0}$ for all $i$,
and thus play an important role, as we shall soon see.

First, we study some commutation rules the Euler operator
$\theta_i$ has with $x_i$ and $\partial_i$. For $Q_n$, in order
abbreviate the size of our formulae, we introduce the so called
$q$-bracket.
\vspace*{-5pt}
\begin{definition}
  For $n \in \NN$ and $q \in \KK\setminus\{0\}$,  
  the {\bf $q$-bracket of $n$} 
  is defined to be
  \(
  [n]_q := \frac{1-q^n}{1-q} = \sum_{i = 0}^{n-1}q^i.
  \)
\end{definition}
\vspace*{-5pt}
\begin{lemma}[Compare with \cite{SaStuTaka:2000}]
  \label{lem:rewriteKTheta}
  In $A_n$, the following commutation rules hold for $m \in \NN$ and
  $i \in \undlnn$:
  \begin{align*}
    \theta_i x_i^m = x_i^m (\theta_i + m), \quad  \theta_i \partial_i^m  =\partial_i^m(\theta_i - m).
  \end{align*}
  More generally, in $Q_n$, the following commutation rules hold for
  $m \in \NN$ and $i \in \undlnn$:
  \[
    \theta_i x_i^m = x_i^m (q_i^m\theta_i + [m]_{q_i}),~
    \theta_i  \partial_i^m  =  \frac{\partial_i^m}{q_i}\left(
    \frac{\theta_i-1}{q_i^{m-1}} - \frac{q_i^{-m+2}-q_i}{1-q_i}
    \right).
  \]
\end{lemma}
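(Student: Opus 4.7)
The plan is to prove each of the four identities by induction on $m$, with the base case $m=1$ following directly from the defining commutation relations $\partial_i x_i = x_i\partial_i + 1$ (in $A_n$) and $\partial_i x_i = q_i x_i \partial_i + 1$ (in $Q_n$). Since $\theta_i$, $x_i$, $\partial_i$ commute with all $x_j,\partial_j$ for $j\neq i$ in these algebras, the index $i$ may be treated as fixed and the computations reduce to manipulations in a single pair $(x_i,\partial_i)$.

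For $A_n$, the base case reads $\theta_i x_i = x_i\partial_i x_i = x_i(x_i\partial_i + 1) = x_i(\theta_i + 1)$ and symmetrically $\theta_i \partial_i = (\partial_i x_i - 1)\partial_i = \partial_i(\theta_i - 1)$. The inductive step for $\theta_i x_i^{m+1}$ is then $\theta_i x_i^m \cdot x_i = x_i^m(\theta_i + m)x_i = x_i^m(x_i(\theta_i+1) + m x_i) = x_i^{m+1}(\theta_i + m + 1)$, using the base case to commute $\theta_i$ past the last $x_i$; the argument for $\partial_i^m$ is entirely analogous. For $Q_n$ the analogous base case yields $\theta_i x_i = x_i(q_i\theta_i + 1) = x_i(q_i \theta_i + [1]_{q_i})$, and the inductive step collapses using the telescoping identity $q_i^m + [m]_{q_i} = [m+1]_{q_i}$, which is immediate from $[n]_{q_i} = \sum_{j=0}^{n-1} q_i^j$.

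The main obstacle is the $q$-deformed identity for $\theta_i\partial_i^m$, where the closed form is less transparent. I would first rewrite it in the more convenient form
\[
\theta_i \partial_i^m = q_i^{-m}\partial_i^m(\theta_i - 1) \;-\; q_i^{-1}\partial_i^m \cdot \frac{q_i^{2-m}-q_i}{1-q_i},
\]
verify that the $m=1$ instance collapses (the second term vanishes, giving $q_i^{-1}\partial_i(\theta_i-1)$, which is exactly what the relation $\partial_i x_i = q_i x_i\partial_i + 1$ yields for $\theta_i\partial_i$), and then run the induction. In the inductive step I would push an additional $\partial_i$ past $(\theta_i-1)$ using the base case, collect the resulting $\partial_i^{m+1}$ terms, and reduce the required equality to the scalar identity
\[
-\,q_i^{-m} \;-\; q_i^{-1}\cdot \frac{q_i^{2-m}-q_i}{1-q_i} \;=\; -\,q_i^{-1}\cdot \frac{q_i^{1-m}-q_i}{1-q_i},
\]
which, after clearing $-(1-q_i)$, simplifies on both sides to $q_i^{-m}-1$. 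This verifies the inductive step and completes the proof; everything else is bookkeeping.
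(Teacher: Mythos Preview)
Your induction argument is correct and complete. The base cases follow directly from the defining relations, the inductive steps for $A_n$ and for $\theta_i x_i^m$ in $Q_n$ are routine, and your verification of the scalar identity
\[
-q_i^{-m} - q_i^{-1}\cdot\frac{q_i^{2-m}-q_i}{1-q_i} \;=\; -q_i^{-1}\cdot\frac{q_i^{1-m}-q_i}{1-q_i}
\]
does close the induction for the $q$-deformed $\theta_i\partial_i^m$ case (both sides reduce to $q_i^{-m}-1$ after clearing $-(1-q_i)$, exactly as you say).

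As for comparison: the paper does not actually supply a proof of this lemma. It is stated with a ``Compare with'' reference to Saito--Sturmfels--Takayama and then used without further argument. Your induction is the standard and expected way to establish these identities, so there is nothing to contrast; you have simply filled in what the paper left implicit.
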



The commutation rules described in Lemma \ref{lem:rewriteKTheta} can,
of course, be extended to arbitrary polynomials in the $\theta_i$,
$i\in \undlnn$.

\vspace*{-6pt}
\begin{corollary}
  \label{cor:thetaswap}
  Consider $f(\uTheta)\in
  \KK[\uTheta]$.
Then, in $A_n$ we have
\( f(\uTheta)\uX^e = \uX^e f(\theta_1 + e_1,
  \ldots, \theta_n + e_n),\) and 
  \(f(\uTheta)\uD^e =
  \uD^e f\left( \theta_1 -e_1, \ldots, \theta_n-e_n\right).\)
Analogous identities with the respective commutation rules as given in
Lemma \ref{lem:rewriteKTheta} hold for~$Q_n$.
\end{corollary}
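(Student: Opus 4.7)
The plan is to reduce this to Lemma~\ref{lem:rewriteKTheta} by three successive simplifications: $\KK$-linearity in $f$, decoupling across the $n$ indices, and induction on the exponent of each $\theta_i$.

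First, both sides of each claimed identity are $\KK$-linear in $f$, and the substitutions $\theta_i \mapsto \theta_i \pm e_i$ induce $\KK$-algebra endomorphisms of $\KK[\uTheta]$. It therefore suffices to verify the identities for monomials $f(\uTheta) = \theta_1^{d_1}\cdots \theta_n^{d_n}$ with $d \in \NN_0^n$. Next, since $\theta_i = x_i \partial_i$ and since in both $A_n$ and $Q_n$ the generators with distinct indices commute, $\theta_i$ commutes with $x_j$ and $\partial_j$ whenever $i \neq j$. Hence inside $\theta_1^{d_1}\cdots \theta_n^{d_n}\,\uX^e$ I will slide each $\theta_i^{d_i}$ past every $x_j^{e_j}$ with $j\neq i$, reducing the full claim to the $n$ single-index identities $\theta_i^{d_i} x_i^{e_i} = x_i^{e_i}(\theta_i+e_i)^{d_i}$ in the Weyl case (and the analogous identity $\theta_i^{d_i}\partial_i^{e_i} = \partial_i^{e_i}(\theta_i-e_i)^{d_i}$), together with their $q$-analogues in $Q_n$.

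Each of these single-index identities then follows by induction on $d_i$ from Lemma~\ref{lem:rewriteKTheta}: the base case $d_i = 1$ is the lemma itself, and the inductive step goes through because the polynomial substituted for $\theta_i$ (either $\theta_i \pm e_i$ in $A_n$, or the affine $q$-transformation prescribed by Lemma~\ref{lem:rewriteKTheta} in $Q_n$) lies in the commutative subring $\KK[\theta_i]$ and hence commutes with $\theta_i$. Reassembling the $n$ single-index identities and invoking linearity in $f$ recovers $\uX^e f(\theta_1+e_1,\ldots,\theta_n+e_n)$, and likewise for the $\uD^e$ case.

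I do not expect a substantive obstacle here, since the statement is essentially a multi-index bookkeeping extension of Lemma~\ref{lem:rewriteKTheta}. The only point that warrants care is the $Q_n$ version, where the substitution induced on $\theta_i$ is no longer a pure shift but an affine $q$-transformation, so one has to track how $d_i$-fold composition of such substitutions acts on $\theta_i$; because this all happens inside $\KK[\theta_i]$, no noncommutative subtlety intrudes and the composition collapses to the required closed form.
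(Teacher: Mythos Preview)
Your argument is correct and is exactly the natural extension the paper has in mind; in fact the paper gives no proof at all for this corollary, merely remarking that the commutation rules of Lemma~\ref{lem:rewriteKTheta} ``can, of course, be extended to arbitrary polynomials in the $\theta_i$.'' Your three-step reduction (linearity in $f$, decoupling of indices via the commutativity of $\theta_i$ with $x_j,\partial_j$ for $i\neq j$, and induction on the exponent of $\theta_i$) makes this extension explicit and handles the $Q_n$ case with appropriate care.
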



\vspace*{-12pt}
\section{Factoring Graded Polynomials}
\label{sctn:factGraded}

For graded polynomials, the main idea of our factorization technique
lies in the reduction to a commutative univariate polynomial subring
of $A_n$, respectively $Q_n$, namely $\KK[\uTheta]$.  Actually, it
appears that this subring is quite large in the sense of reducibility
of its elements in $A_n$ (resp. $Q_n$).

Due to the commutativity of $x_i$ with $\partial_j$, for $i\neq j$, we
can write
\(
\uX^a \uD^b = 
x_1^{a_1} \cdots x_n^{a_n} \cdot \partial_1^{b_1} \cdots \partial_n^{b_n}=
x_1^{a_1} \cdot \partial_1^{b_1} \cdots  x_n^{a_n} \cdot \partial_n^{b_n}
\) for any $a,b \in \NN_0^n$.
By definition, a monomial $\uX^a \uD^b$ has degree $\uZero:=[0,
\ldots, 0]$ if and only if $a=b$. The following lemma shows, how we
can rewrite every homogeneous polynomial of degree $\uZero$ in $A_n$
(resp. $Q_n$) as a polynomial in $\KK[\uTheta]$.

\vspace*{-5pt}
\begin{lemma}[Compare with \cite{SaStuTaka:2000}, Lemma 1.3.1]
  \label{lem:homogToTheta}
  In $A_n$, we have the identity \(
  x_i^m\partial_i^m = \prod_{j = 0}^{m-1} (\theta_i - j).
  \) for $m \in \NN$ and $i \in \undlnn$.
  In $Q_n$, one can rewrite $x_i^m\partial_i^m$ as element in $\KK[\uTheta]$
  and it is equal to
  \(\frac{1}{q_i^{T_{m-1}}}\prod_{j=0}^{m-1}\left(\theta_i
    - [j]_{q_i}\right),
  \)
  where $T_j$ denotes the $j$th triangular number, i.e.,  $T_j :=j(j+1)/2$ for all $j \in \NN_0$.
\end{lemma}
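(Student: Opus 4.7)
The plan is to prove both identities by induction on $m$, using Corollary \ref{cor:thetaswap} (or its $Q_n$ analogue) to shuttle polynomials in $\theta_i$ past $x_i$ and $\partial_i$. Since the two sides of each identity involve only the variables $x_i, \partial_i, \theta_i$ for a single fixed index $i$, and since all objects for different indices commute, the argument reduces to working in one $q$-Weyl pair at a time.

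\textbf{Case of $A_n$.} The base case $m=1$ is immediate from the definition $\theta_i = x_i\partial_i$. For the inductive step, assume $x_i^m \partial_i^m = \prod_{j=0}^{m-1}(\theta_i - j)$ and compute
\[
x_i^{m+1}\partial_i^{m+1} = x_i \cdot \bigl(x_i^m \partial_i^m\bigr) \cdot \partial_i = x_i \Bigl(\prod_{j=0}^{m-1}(\theta_i - j)\Bigr) \partial_i.
\]
Applying Corollary \ref{cor:thetaswap} with $f(\theta_i) = \prod_{j=0}^{m-1}(\theta_i - j)$ and $e = (0,\ldots,1,\ldots,0)$ (in the $i$th position) moves $\partial_i$ to the left, producing a shift $\theta_i \mapsto \theta_i - 1$, so the bracket becomes $\prod_{j=1}^{m}(\theta_i - j)$. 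Pulling the remaining $x_i\partial_i = \theta_i$ out front supplies the missing factor $(\theta_i - 0)$, yielding $\prod_{j=0}^{m}(\theta_i - j)$, as required.

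\textbf{Case of $Q_n$.} The same scaffolding works, but now every commutation step contributes $q_i$-factors dictated by Lemma \ref{lem:rewriteKTheta}. Again $m=1$ checks trivially since $T_0 = 0$ and $[0]_{q_i} = 0$. For the inductive step I would write $x_i^{m+1}\partial_i^{m+1} = x_i(x_i^m\partial_i^m)\partial_i$, substitute the hypothesis, and push $\partial_i$ to the left using the $Q_n$ analogue of Corollary \ref{cor:thetaswap}. In the $Q_n$ setting, a factor $(\theta_i - [j]_{q_i})$ commuted past $\partial_i$ on the right becomes $\partial_i \cdot \tfrac{1}{q_i}\bigl(\tfrac{\theta_i - 1}{1} - (\text{correction})\bigr)$-type expression; the net effect is an index shift in the $q$-bracket combined with a uniform scalar of the form $q_i^{-m}$. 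Finally, premultiplying by $x_i$ and using $x_i\partial_i = \theta_i$ adds the new factor $(\theta_i - [0]_{q_i}) = \theta_i$ on the left. Tracking the total scalar gives exactly $q_i^{-T_{m-1}-m} = q_i^{-T_m}$, matching the claimed prefactor since $T_m = T_{m-1} + m$.

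The main obstacle will be the Lewis-style bookkeeping of $q$-factors in the $Q_n$ step: one has to verify that shifting each of the $m$ factors $(\theta_i - [j]_{q_i})$ past $\partial_i$ converts them cleanly into $(\theta_i - [j+1]_{q_i})$ times the correct $q_i$-power. The clean way to handle this is to first prove the auxiliary identity $(\theta_i - [j]_{q_i})\partial_i = q_i^{-1}\partial_i(\theta_i - [j+1]_{q_i})$, which follows directly from the $m=1$ instance of Lemma \ref{lem:rewriteKTheta} and a short manipulation using $[j+1]_{q_i} = 1 + q_i [j]_{q_i}$. Iterating this identity $m$ times yields the scalar $q_i^{-m}$ and shifts the entire product $\prod_{j=0}^{m-1}(\theta_i-[j]_{q_i})$ to $\prod_{j=1}^{m}(\theta_i-[j]_{q_i})$, and the induction closes as in the $A_n$ case.
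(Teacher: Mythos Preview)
The paper does not supply a proof of this lemma; it is stated with a reference to \cite{SaStuTaka:2000}, Lemma~1.3.1, and the text moves directly to the next corollary. So there is no ``paper's own proof'' to compare against.

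Your argument is correct. The $A_n$ induction is clean and uses only Corollary~\ref{cor:thetaswap}, which precedes this lemma, so there is no circularity. For $Q_n$, your key auxiliary identity $(\theta_i - [j]_{q_i})\partial_i = q_i^{-1}\partial_i(\theta_i - [j+1]_{q_i})$ does follow from the $m=1$ case of Lemma~\ref{lem:rewriteKTheta}: one computes $\theta_i\partial_i = q_i^{-1}\partial_i(\theta_i-1)$, subtracts $[j]_{q_i}\partial_i$, and uses $[j+1]_{q_i} = 1 + q_i[j]_{q_i}$. Iterating this $m$ times contributes the scalar $q_i^{-m}$ and shifts the product as you describe, and $T_{m-1}+m=T_m$ closes the induction. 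The only cosmetic suggestion is to drop the phrase ``Lewis-style bookkeeping,'' which is unclear, and simply state the auxiliary identity up front and iterate it.
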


\vspace*{-8pt}
\begin{corollary}
The $\uZero$th graded part of $Q_n$, respectively $A_n$, is
$\KK[\theta_1, \ldots, \theta_n]$.
\end{corollary}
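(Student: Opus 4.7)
The plan is to prove both inclusions between the $\uZero$th graded part of $Q_n$ (resp. $A_n$) and $\KK[\uTheta]$.

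For the inclusion $\KK[\uTheta] \subseteq \{\text{degree-}\uZero\text{ part}\}$, I would observe that each Euler operator $\theta_i = x_i\partial_i$ is a monomial of the form $\uX^{e_i}\uD^{e_i}$ (with $e_i$ the $i$th standard basis vector), so by the grading formula $\deg(\uX^a \uD^b) = [b_1-a_1,\ldots,b_n-a_n]$ we get $\deg(\theta_i) = \uZero$. Since the $\uZero$th graded component is closed under addition and multiplication (gradings are compatible with the algebra structure), every polynomial expression in the $\theta_i$ with coefficients in $\KK$ lives there.

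For the reverse inclusion, I would take an arbitrary homogeneous element $f$ of degree $\uZero$ and write it as a $\KK$-linear combination of standard monomials $\uX^a \uD^b$. The condition $\deg(\uX^a\uD^b) = \uZero$ forces $b_i - a_i = 0$ for all $i$, hence $a = b$. Using the commutativity of $x_i$ with $\partial_j$ for $i \neq j$, I would reorder
\[
\uX^a \uD^a = \prod_{i=1}^n x_i^{a_i} \partial_i^{a_i}.
\]
Now Lemma~\ref{lem:homogToTheta} rewrites each factor $x_i^{a_i}\partial_i^{a_i}$ as an element of $\KK[\theta_i] \subseteq \KK[\uTheta]$ (in the $A_n$ case as $\prod_{j=0}^{a_i-1}(\theta_i - j)$, and analogously via the $q$-bracket formula in the $Q_n$ case). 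Taking the product over $i$ yields an element of $\KK[\uTheta]$, and summing over the monomials of $f$ puts $f$ itself in $\KK[\uTheta]$.

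I do not expect any real obstacle here: both inclusions are essentially bookkeeping once Lemma~\ref{lem:homogToTheta} is in hand. The only mild subtlety is the legitimacy of reordering $\uX^a\uD^a$ as $\prod_i x_i^{a_i}\partial_i^{a_i}$, but this is immediate from the defining relations $\partial_i x_j = x_j \partial_i$ for $i \neq j$ in both $A_n$ and $Q_n$.
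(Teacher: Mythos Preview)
Your proposal is correct and follows essentially the same approach as the paper: the paper states the corollary without a formal proof, but the surrounding text observes that a monomial $\uX^a\uD^b$ has degree $\uZero$ iff $a=b$, reorders $\uX^a\uD^a = \prod_i x_i^{a_i}\partial_i^{a_i}$ via the commutativity of $x_i$ with $\partial_j$ for $i\neq j$, and then invokes Lemma~\ref{lem:homogToTheta} on each factor. Your only addition is the (easy) explicit verification of the inclusion $\KK[\uTheta]\subseteq A_n^{(\uZero)}$, which the paper leaves implicit.
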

\vspace*{-5pt}

Recall, that the $z$th graded part for $z \in \ZZ^n$ of $Q_n$, resp. $A_n$,
is defined to be the $\KK$-vector space:
$$Q_n^{(z)} := \KK \{\uX^{n_1}\uD^{n_2} : n_1,n_2\in \NN_0^n, \ n_2-n_1=z \},$$
i.e., the degree of a monomial is determined by the difference of its
powers in the $x_i$ and the $\partial_i$. Moreover, since in a grading 
$Q_n^{(z_1)} \cdot Q_n^{(z_2)} \subseteq Q_n^{(z_1+z_2)}$ holds
for all $z_1, z_2 \in \ZZ^n$, $Q_n^{(z)}$ is naturally a $Q_n^{(\uZero)}$-module.

\vspace*{-6pt}
\begin{proposition}
  \label{prop:A0completeWeylDescription}
  For $z \in \ZZ^n\setminus\{\uZero\}$, the $z$th graded part
  $Q_n^{(z)}$, resp. $A_n^{(z)}$, is a cyclic $\KK[\uTheta]$-bimodule,
  generated by the element $\uX^{e(z)} \uD^{w(z)}$, exponent vectors
  of which read for $i \in\undlnn$ as follows:
  \begin{align*}
    e_i(z) :=
    \begin{cases}
      -z_i, & \text{if } z_i <0,\\
      0, & \text{otherwise},
    \end{cases},\quad
    w_i(z) :=
    \begin{cases}
      z_i, & \text{if } z_i >0,\\
      0, & \text{otherwise}.
    \end{cases}
  \end{align*}
\end{proposition}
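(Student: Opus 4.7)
The plan is to show that every $\KK$-basis monomial $\uX^a\uD^b$ of $Q_n^{(z)}$ can be written as $p(\uTheta)\cdot \uX^{e(z)}\uD^{w(z)}$ for some $p \in \KK[\uTheta]$. Since such monomials span $Q_n^{(z)}$ over $\KK$, this suffices for the bimodule statement (and incidentally shows $Q_n^{(z)}$ is already cyclic as a left $\KK[\uTheta]$-module). Before embarking on this I would simply verify that $\uX^{e(z)}\uD^{w(z)}$ itself has degree $z$: by the definitions of $e_i(z)$ and $w_i(z)$, the $i$-th coordinate of its degree equals $w_i(z) - e_i(z) = z_i$, checked separately in the three cases $z_i > 0$, $z_i < 0$, and $z_i = 0$.

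For the main step, fix a monomial $\uX^a\uD^b$ with $b - a = z$. Since $x_i, \partial_i$ commute with $x_j, \partial_j$ for $i \ne j$, I would regroup it as $\prod_{i=1}^{n} x_i^{a_i}\partial_i^{b_i}$, reducing the problem to $n$ one-variable tasks. For each index $i$ I apply Lemma \ref{lem:homogToTheta} according to the sign of $z_i$: when $z_i > 0$, factor $x_i^{a_i}\partial_i^{b_i} = (x_i^{a_i}\partial_i^{a_i})\cdot\partial_i^{z_i}$ and rewrite the balanced prefix as a polynomial in $\theta_i$; when $z_i < 0$, pull $x_i^{-z_i}$ out on the left, leaving a balanced $x_i^{b_i}\partial_i^{b_i}$ that likewise collapses into $\KK[\theta_i]$; when $z_i = 0$, the factor already lies in $\KK[\theta_i]$. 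In every case the one-variable piece has been written as the product of a $\theta_i$-polynomial and precisely the $i$-th component of $\uX^{e(z)}\uD^{w(z)}$.

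To finish I would collect the pieces. Using Corollary \ref{cor:thetaswap}, any $\theta_i$-polynomial may be transported to the left past an $x_i$-power or a $\partial_i$-power at the cost of a shift in its argument; combined with the commutation of factors carrying different indices, this lets me move every $\theta_i$-polynomial past the remaining $x_j^{-z_j}$ (for $z_j < 0$) and $\partial_j^{z_j}$ (for $z_j > 0$) to the front. The leftover monomial is exactly $\uX^{e(z)}\uD^{w(z)}$, so $\uX^a\uD^b = \tilde p(\uTheta)\cdot\uX^{e(z)}\uD^{w(z)}$ for some $\tilde p\in\KK[\uTheta]$, as required.

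The step I expect to require the most care is the $Q_n$ case, where the rewriting formulas of Lemma \ref{lem:homogToTheta} and the swap identities behind Corollary \ref{cor:thetaswap} carry extra $q_i$-scalars and $q$-bracket shifts. These complicate the explicit form of $\tilde p$, but not its existence: each manipulation is a $\KK[\uTheta]$-linear change of representation, so the final product is still in $\KK[\uTheta]\cdot \uX^{e(z)}\uD^{w(z)}$, and the cyclic bimodule claim follows uniformly for both $A_n$ and $Q_n$.
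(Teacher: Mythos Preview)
Your proof is correct and follows essentially the same route as the paper: the paper writes a degree-$z$ monomial as $\uX^{e(z)}\,\uX^{k}\uD^{k}\,\uD^{w(z)}$, collapses the middle block to a polynomial in $\uTheta$ via Lemma~\ref{lem:homogToTheta}, and then transports that polynomial to either side using Corollary~\ref{cor:thetaswap}; you do the same thing coordinate-by-coordinate with a sign-of-$z_i$ case split, which is just a more explicit unpacking of the same computation.
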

\begin{proof}
  
  A polynomial $p \in  Q_n^{(z)}$ resp. $p \in A_n^{(z)}$ is homogeneous of degree $z \in \ZZ^n$ if and only if every monomial of $p$ is of the form
  $\uX^{\overline{k} + e(z)}\uD^{\overline{k} + w(z)}$, where $k \in
  \NN_0$ and $\overline{k} := [k,\ldots, k]$.  By doing a rewriting, similar to
  the above, we obtain
  \(
  \uX^{\overline{k} + e(z)}\uD^{\overline{k} + w(z)} = 
   \uX^{e(z)}\uX^{\overline{k}}\uD^{\overline{k}}\uD^{w(z)} = 
    \uX^{e(z)}f_{k}(\underline{\theta})\uD^{w(z)},
  \)
  where $f_{k}(\underline{\theta})$ is computed utilizing Lemma \ref{lem:homogToTheta}. Moreover, by Corollary \ref{cor:thetaswap},
  we conclude that 
  \[
  \uX^{e(z)} f_{k}(\underline{\theta})\uD^{w(z)} = 
  f_{k}(\theta_1 -e_1(z),\ldots,\theta_n -e_n(z))\uX^{e(z)}\uD^{w(z)}
  \]
  or, equivalently, $\uX^{e(z)}\uD^{w(z)} f_{k}(\theta_1 +w_1(z),\ldots,\theta_n +w_n(z))$, showing the cyclic bimodule property.
\end{proof}

Therefore, the factorization of a homogeneous polynomial of degree
zero can be done by rewriting the polynomial as an element in
$\KK[\uTheta]$ and applying a commutative factorization on the
polynomial, a much-better-understood problem which is also well
implemented in every computer algebra system.

Of course, this would not be a complete factorization, as there are
still elements irreducible in $\KK[\uTheta]$ which are reducible in $Q_n$,
resp. $A_n$. An obvious example are the $\theta_i$ themselves.
Fortunately, there are only $2n$ monic polynomials irreducible
in $\KK[\uTheta]$ that are reducible in $A_n$, resp. $Q_n$, and these
are of quite a special form.  This
extends the proof for $A_1$ and $Q_1$ presented in
\cite{heinle2013factorization}.
\vspace*{-4pt}
\begin{lemma}
\label{lem:thetairred}
Let $i \in \undlnn$.
The polynomials $\theta_i$ and $\theta_i + \frac{1}{q_i}$ are the only
irreducible monic elements in $\KK[\uTheta]$ that are reducible in
$Q_n$. Respectively, $\theta_i $ and $\theta_i +1$ are the only irreducible monic polynomials in $\KK[\uTheta]$ that are reducible in $A_n$.
\end{lemma}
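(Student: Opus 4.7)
The plan is to prove both directions, with the reverse direction being the substantive one. For the easy direction, I would simply exhibit factorizations: in $A_n$, $\theta_i = x_i \partial_i$ and $\theta_i + 1 = \partial_i x_i$; in $Q_n$, $\theta_i = x_i \partial_i$ and $\theta_i + \frac{1}{q_i} = q_i^{-1} \partial_i x_i$, since $\partial_i x_i = q_i \theta_i + 1$. Each expresses the polynomial as a product of two non-units.

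For the converse, suppose $p \in \KK[\uTheta]$ is irreducible in $\KK[\uTheta]$ and admits a nontrivial factorization $p = f g$ in $Q_n$ (resp.\ $A_n$). Since $p$ lies in the $\uZero$th graded part and the grading is compatible with multiplication, the factors must be homogeneous of complementary degrees: $f \in Q_n^{(z)}$ and $g \in Q_n^{(-z)}$ for some $z \in \ZZ^n$. If $z = \uZero$, both $f$ and $g$ lie in $\KK[\uTheta]$, directly contradicting the irreducibility of $p$ in $\KK[\uTheta]$. So I may assume $z \neq \uZero$, and Proposition \ref{prop:A0completeWeylDescription} lets me write $f = \uX^{e(z)} \uD^{w(z)}\, h(\uTheta)$ and $g = \uX^{e(-z)} \uD^{w(-z)}\, \tilde{h}(\uTheta)$ for some $h, \tilde{h} \in \KK[\uTheta]$.

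The crux is then to compute $fg$ in $\KK[\uTheta]$. Applying Corollary \ref{cor:thetaswap} to slide $h(\uTheta)$ past $\uX^{e(-z)} \uD^{w(-z)}$, and applying Lemma \ref{lem:homogToTheta} (or its $q$-analog) to collapse $\uX^{e(z)}\uD^{w(z)}\uX^{e(-z)}\uD^{w(-z)}$ into $\KK[\uTheta]$, yields an identity of the form $p = P(\uTheta)\cdot h^{\mathrm{sh}}(\uTheta) \cdot \tilde{h}(\uTheta)$, where $P(\uTheta)$ is an explicit product, factoring coordinate-wise as $\prod_{i:\, z_i>0}\partial_i^{z_i} x_i^{z_i} \cdot \prod_{i:\, z_i<0} x_i^{-z_i}\partial_i^{-z_i}$, each piece of which is a nontrivial product of linear factors in a single $\theta_i$. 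Since $z \neq \uZero$, $P$ is nonconstant, so irreducibility of $p$ in $\KK[\uTheta]$ forces $h$ and $\tilde{h}$ to be constants and $P$ itself to reduce, up to scalar, to a single linear polynomial.

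The final step is a small case analysis to determine when $P(\uTheta)$ is linear. For $z_i = m > 0$, the contribution is $\partial_i^m x_i^m$, which in $A_n$ is $\prod_{j=1}^{m}(\theta_i + j)$ and in $Q_n$ is $\prod_{k=1}^{m}(q_i^k \theta_i + [k]_{q_i})$; for $z_i = -m < 0$ the contribution is $x_i^m \partial_i^m$ from Lemma \ref{lem:homogToTheta}. These factors are always pairwise distinct (for $Q_n$ this uses that $q_i$ is a unit, preventing coincidence), so $P$ is linear iff exactly one coordinate of $z$ is nonzero and has magnitude one. Enumerating the four cases $z = \pm e_i$ recovers exactly the monic polynomials $\theta_i$ and $\theta_i + 1$ in $A_n$, and $\theta_i$ and $\theta_i + \frac{1}{q_i}$ in $Q_n$. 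The main obstacle I anticipate is the bookkeeping of the shifts in Corollary \ref{cor:thetaswap} and the verification that the linear factors contributing to $P$ are all distinct in the $Q_n$ setting.
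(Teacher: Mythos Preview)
Your proposal is correct and follows essentially the same route as the paper's proof: reduce to homogeneous factors $f\in A_n^{(z)}$, $g\in A_n^{(-z)}$, use Proposition~\ref{prop:A0completeWeylDescription} to write each as a $\KK[\uTheta]$-multiple of the canonical monomial, collapse the middle product $\uX^{e}\uD^{w}\uX^{w}\uD^{e}$ into $\KK[\uTheta]$ via Lemma~\ref{lem:homogToTheta}, and then use irreducibility in $\KK[\uTheta]$ to force the $\KK[\uTheta]$-multiples to be scalars and $|z|$ to have a single entry equal to $1$. Two small remarks: the paper places the $\KK[\uTheta]$-factor on the left rather than the right (immaterial by the bimodule statement in Proposition~\ref{prop:A0completeWeylDescription}), and your distinctness concern is unnecessary---the total $\uTheta$-degree of $P$ is $\sum_i |z_i|$ regardless of coincidences, which already forces $z=\pm e_i$.
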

\begin{proof}
 We only consider the proof for $A_n$, as the proof for $Q_n$ is
 done in an analogous way.
 Let $f \in \KK[\uTheta]$ be a monic polynomial. Assume that it is
 irreducible in $\KK[\uTheta]$, but reducible in $A_n$.
 Let $\varphi,\psi$ be elements in $A_n$ with $\varphi\psi = f$. Then
$\varphi$ and $\psi$ are homogeneous and $\varphi \in A_n^{(-z)}, \psi
 \in A_n^{(z)}$ for a $z\in \ZZ^n$. 
 Let $[e,w]:=[e(z), w(z)]$ be as in Proposition \ref{prop:A0completeWeylDescription}. Note, that then $w(-z) = e(z) = e$ and $e(-z)=w(z)=w$ holds. That is, $A_n^{(z)} = \KK[\uTheta]\uX^{e}\uD^{w}$ whereas $A_n^{(-z)} = \KK[\uTheta]\uX^{w}\uD^{e}$.
Then for 
$\tilde \varphi, \tilde \psi \in \KK[\uTheta] $, we have
$ \varphi = \tilde \varphi(\uTheta) \uX^{e}\uD^{w}$ and $\psi =  \tilde \psi (\uTheta)\uX^{w}\uD^{e}.$
 Using Corollary \ref{cor:thetaswap}, we obtain
\begin{align*}
f = \tilde \varphi (\uTheta) \uX^{e}\uD^{w} \tilde \psi
(\uTheta) \uX^{w}\uD^{e}
   = \tilde
 \varphi (\uTheta) \uX^{e}\uD^{w}\uX^{w}\uD^{e}
 \tilde \psi (\uTheta + w - e),
 \end{align*}
 where, by Lemma \ref{lem:homogToTheta}, $\uX^{e}\uD^{w}\uX^{w}\uD^{e}
 = g(\uTheta) \in \KK[\uTheta]$.  Since vectors $e$ and $w$ have
 disjoint support and $e+w=[|z_1|,\ldots,|z_n|]$, $g$ is
 irreducible by Lemma \ref{lem:homogToTheta} only if there is at
 most one nonzero $z_i$. If $z=\uZero$, then $e=w=0$, hence $g = 1$
 and $\phi, \psi \in \KK[\uTheta]$. Since $f$ has been assumed to be
 monic irreducible in $\KK[\uTheta]$, one $\phi$ and $\psi$ give us a
 trivial factorization.

  Now, suppose that there exists exactly one $i$ such that $z_i > 0$. 
Then $e(z)=0$ and $w(z)=z$ is zero on all but $i$th place. By the
irreducibility assumption on $f \in \KK[\uTheta]$ we must have
$\tilde \varphi,\tilde \psi \in \KK$; since $f$ is monic, we must
also have $\tilde\varphi = \tilde\psi^{-1}$.  By Lemma
\ref{lem:homogToTheta} we obtain $z_i = 1$.  As a result, the only
possible $f$ in this case is $f=\theta_i + 1$. For analogous reasons
for the case when $z_i <0$, we conclude, that the only possible $f$
in that case is $f = \theta_i$. 
\end{proof}
\vspace*{-5pt}

The result in Lemma \ref{lem:thetairred} provides us with an easy way to factor a homogeneous polynomial $p
\in A_n$, resp. $p \in Q_n$, of
degree $\uZero$. Obtaining one possible
factorization into irreducible polynomials can be done using the following steps:

\begin{enumerate}[topsep=2pt,itemsep=0pt,parsep=2pt]
  \item Rewrite $p$ as an element in $\KK[\uTheta]$;
  \item Factorize this resulting element in $\KK[\uTheta]$ with
    commutative methods;
  \item If there is $\theta_i$ or $\theta_i+1$ for $i \in \undlnn$ among the factors,
    rewrite it as $x_i \cdot \partial_i$ resp. $\partial_i\cdot x_i$.
\end{enumerate}

As mentioned earlier, the factorization of a polynomial in a
noncommutative ring is unique up to a weak
similarity~\cite{Bueso:2003}.  This notion is much more involved than
the similarity up to multiplication by units or up to interchanging
factors, as in the commutative case.  Indeed, several different
nontrivial factorizations can occur.  Fortunately, in the case of the
polynomial first ($\underline{q}$-)Weyl algebra, there are only
finitely many different nontrivial factorizations possible due to
\cite{Tsarev:1996}. In order to obtain all these different
factorizations, one can apply the commutation rules for $x_i$ and
$\partial_i$ with $\theta_i$ for $i \in \undlnn$. That these are all
possible factorizations up to multiplication by units can be seen
using an analogue approach as in the proof of Lemma
\ref{lem:thetairred}. Consider the following example.

\vspace*{-7pt}
\begin{example}
  Let
  \(
  p :=
x_1^2x_2\partial_1^2\partial_2+2x_1x_2\partial_1\partial_2+x_1\partial_1+1
\in A_2.
  \)
  The polynomial $p$ is homogeneous of degree $\uZero$, and hence
  belongs to $\KK[\uTheta]$ as
  \(
  \theta_1(\theta_1-1)\theta_2 + 2\theta_1\theta_2 + \theta_1 + 1.
  \)
This polynomial factorizes in $\KK[\uTheta]$ into $(\theta_1\theta_2 +
1)(\theta_1+1)$.
 Since $\theta_1+1$ factorizes as $\partial_1 \cdot x_1$, 
we obtain the following possible different nontrivial factorizations:
\[
(\theta_1\theta_2 + 1)\cdot \partial_1 \cdot x_1 = 
\partial_1 \cdot ((\theta_1-1)\theta_2 + 1) \cdot x_1=
\partial_1\cdot x_1 \cdot (\theta_1\theta_2 + 1).
\]
Note that $x_1 \partial_1 +1$ is not irreducible, since it factorizes nontrivially as  $\partial_1 \cdot x_1$.
\end{example}

\vspace*{-3pt}
Now we consider the factorization of homogeneous polynomials of
arbitrary degree $z \in \ZZ^n$. Fortunately, the hard work is already done
in Proposition \ref{prop:A0completeWeylDescription}.
Indeed, one factorization of a homogeneous polynomial $p \in
Q_n^{(z)}$, resp. $p \in A_n^{(z)}$, of degree $z \in \ZZ^n$ can be
obtained using the following steps.
\begin{enumerate}[itemsep=0pt,topsep=2pt,itemsep=0pt]
\item Write $p(\uX,\uD)$ as ${\tilde p}(\uTheta) \uX^e \uD^w$,
  where $\tilde p \in A_n^{(\uZero)}=\KK[\uTheta]$ and $e,w$ are
  constructed according to
  Proposition \ref{prop:A0completeWeylDescription}.
\item Factorize $\tilde p$ using the technique described for
  polynomials of degree $\uZero$. Append to such a factorization the
  natural expansion of the monomial $\uX^e \uD^w$ into the product of occuring
  single variables.
\end{enumerate}

This leads to one nontrivial factorization. A characterization of
how to obtain all factorizations is given provided by the following lemma.

\vspace*{-6pt}
\begin{lemma}
Let $z \in \ZZ^n$ and $p \in A_n^{(z)}$, resp. $p \in Q_n^{(z)}$, is
  monic.  Suppose, that one factorization has been constructed as
  above and has the form $Q(\uTheta) \cdot T(\uTheta) \cdot \uX^e
  \uD^w$, where 
  $T(\uTheta) = \prod_{i=1}^{n} (x_i \partial_i)^{t_i} (\partial_j x_j)^{s_i}$
  is a product of irreducible factors in $\KK[\uTheta]$, which are
  reducible in $A_n$, resp. $Q_n$,
  and $Q(\uTheta)$ is the product of irreducible factors
  in both $\KK[\uTheta]$ and $A_n$, resp. $Q_n)$.
  Let $p_1 \cdots p_m$ for $m\in \NN$ be another nontrivial
  factorization of $p$. Then this factorization can be derived from
  $Q(\uTheta) \cdot T(\uTheta) \cdot \uX^e \uD^w$ by using two
  operations, namely (i) ``swapping'', that is interchanging two
  adjacent factors according to the commutation rules and (ii)
  ``rewriting'' of occurring $\theta_i$ resp. $\theta_i +1$ 
    ($\theta_i + \frac{1}{q}$ in the $\underline{q}$-Weyl case) by
    $x_i \cdot \partial_i$ resp. $\partial_i \cdot  x_i$.
\end{lemma}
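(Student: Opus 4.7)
The plan is to reduce any nontrivial factorization of $p$ to the canonical one in two stages: first, show that every factor must itself be homogeneous; second, exploit the cyclic $\KK[\uTheta]$-bimodule structure of each graded part from Proposition~\ref{prop:A0completeWeylDescription} to normalize every factorization to the form $P(\uTheta) \cdot \uX^{e}\uD^{w}$, and then invoke unique factorization in $\KK[\uTheta]$ together with Lemma~\ref{lem:thetairred} to identify the factors.

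\textbf{Step 1: Homogeneity of each $p_i$.} Given $p = p_1\cdots p_m$, decompose each $p_i = \sum_{\zeta} p_i^{(\zeta)}$ into its $\ZZ^n$-graded components and let $M_i$ (resp.\ $m_i$) be the maximum (resp.\ minimum) degree of a nonzero component under the total ordering from Remark~\ref{rem:OrderingOnZn}. Since $A_n^{(\zeta_1)} \cdots A_n^{(\zeta_m)} \subseteq A_n^{(\zeta_1 + \cdots + \zeta_m)}$ and the ordering is compatible with addition, the topmost graded component of $p_1\cdots p_m$ is $p_1^{(M_1)}\cdots p_m^{(M_m)} \in A_n^{(M_1+\cdots+M_m)}$ and dually for the minimum. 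Homogeneity of $p$ forces $M_i = m_i =: z^{(i)}$ for each $i$, so each $p_i$ is homogeneous, with $\sum_i z^{(i)} = z$.

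\textbf{Step 2: Normalization to $\KK[\uTheta] \cdot \uX^{e}\uD^{w}$.} By Proposition~\ref{prop:A0completeWeylDescription}, write each $p_i = \tilde p_i(\uTheta)\cdot \uX^{e(z^{(i)})}\uD^{w(z^{(i)})}$. Using Corollary~\ref{cor:thetaswap}, swap each $\tilde p_i(\uTheta)$ leftward past the preceding monomial blocks; these are precisely operation (i). Then collapse the remaining adjacent monomials $\uD^{w(z^{(i)})}\uX^{e(z^{(i+1)})}$ via Lemma~\ref{lem:homogToTheta}: each pair $\partial_j x_j$ or $x_j \partial_j$ that appears becomes $\theta_j + 1$ or $\theta_j$, and this collapse is the reverse of operation (ii). After finitely many steps, the product reaches the normal form $P(\uTheta)\cdot \uX^{e}\uD^{w}$ with $e = e(z),\ w = w(z)$ dictated by the total degree. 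Applying the same normalization to $Q(\uTheta)\cdot T(\uTheta) \cdot \uX^{e}\uD^{w}$ (itself already in normal form up to expanding $T$) yields a polynomial in $\KK[\uTheta]$ times $\uX^{e}\uD^{w}$, which must equal $P(\uTheta)\cdot \uX^{e}\uD^{w}$ since both represent the same monic $p$.

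\textbf{Step 3: Matching factors.} Since $\KK[\uTheta]$ is a UFD, the multiset of monic irreducible factors of $P$ coincides with that of $Q \cdot T$. Each irreducible factor of a $p_j$ becomes, after normalization, a product of $\KK[\uTheta]$-irreducibles; by Lemma~\ref{lem:thetairred}, any such factor is either (a) already irreducible in $A_n$ (resp.\ $Q_n$), hence matched with a factor of $Q(\uTheta)$, or (b) one of $\theta_i$ or $\theta_i+1$ (resp.\ $\theta_i + 1/q_i$), which after rewriting is one of the $x_i\partial_i$ or $\partial_i x_i$ factors that make up $T(\uTheta)$. Reversing the normalization sequence of Step~2 thus produces $p_1\cdots p_m$ from $Q(\uTheta)\cdot T(\uTheta)\cdot \uX^{e}\uD^{w}$ using only operations (i) and (ii).

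\textbf{Main obstacle.} The delicate point is the bookkeeping in Step~2--3: a $\theta_j$ factor that arises from collapsing a block straddling the boundary between $p_j$ and $p_{j+1}$ could a priori be attributed to either side, and the $\tilde p_i$ appearing in the normal form are shifted arguments of the original $\tilde p_i$. One must therefore verify that any two valid attributions/shifts are related by a sequence of operations (i) combined with the $\theta_i \leftrightarrow x_i\partial_i,\ \theta_i+1 \leftrightarrow \partial_i x_i$ rewrites of (ii); this is where the commutation identities of Corollary~\ref{cor:thetaswap} and the explicit formulas of Lemma~\ref{lem:homogToTheta} must be used systematically, and where the $\underline{q}$-Weyl case requires separate (but parallel) bookkeeping with the $q$-bracket.
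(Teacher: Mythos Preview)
Your proposal is correct and follows essentially the same approach as the paper: both arguments first observe that each $p_i$ must be homogeneous, then write $p_i = \tilde p_i(\uTheta)\,\uX^{e^{(i)}}\uD^{w^{(i)}}$ via Proposition~\ref{prop:A0completeWeylDescription}, swap the $\tilde p_i$ leftward using Corollary~\ref{cor:thetaswap}, collapse the residual monomial product into $\KK[\uTheta]$-factors times $\uX^e\uD^w$, and finally note that the whole procedure is reversible. Your write-up is in fact somewhat more explicit than the paper's---you spell out the max/min degree argument for homogeneity (which the paper simply asserts) and you invoke the UFD property of $\KK[\uTheta]$ to justify the matching in Step~3, which the paper leaves implicit---but the underlying strategy is the same.
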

\begin{proof}
Since $p$ is homogeneous, all $p_i$ for $i \in \underline{m}$ are
 homogeneous, thus 
 each of them can be written in the form
 $p_i = \tilde p_i(\uTheta)\cdot \uX^{e^{(i)}} \uD^{w^{(i)}}$, where
 $e^{(i)}, w^{(i)} \in \NN_0^{n}$. With respect to the commutation
 rules as stated in Corollary \ref{cor:thetaswap}, we can swap the
 $\tilde p_i(\uTheta)$ to the left for any $2 \leq i \leq m$. 
 Note that it is possible for
 them to be transformed to the form $\theta_j$ resp. $\theta_j + 1$
 ($\theta_j + \frac{1}{q}$ in the $\underline{q}$-Weyl case), $j \in
 \undlnn$, after performing these swapping steps.  I.e., we have
 commuting factors, both belonging to $Q(\uTheta)$, as well as to
 $T(\uTheta)$ at the left. Our resulting product is thus $\tilde
 Q(\uTheta) \tilde T(\uTheta) \prod_{j=1}^m \uX^{e^{(j)}}
 \uD^{w^{(j)}}$, where the factors in $\tilde Q(\uTheta)$,
 resp. $\tilde T(\uTheta)$, contain a subset of the factors of
 $Q(\uTheta)$ resp. $T(\uTheta)$. By our assumption of $p$ having
 degree $z$, we are able to swap $\uX^e \uD^w$ to the right in $F
 :=\prod_{j=1}^m \uX^{e^{(j)}} \uD^{w^{(j)}}$, i.e., $F = \tilde F
 \uX^e \uD^w$ for $\tilde F \in A_n^{(0)}$. This step may involve
 combining some $x_j$ and $\partial_j$ to $\theta_j$ resp. $\theta_j
 + 1$, $j \in \undlnn$. Afterwards, this is also done to the
 remaining factors in $\tilde F$ that are not yet polynomials in
 $\KK[\uTheta]$ using the swapping operation. These polynomials are
 the remaining factors that belong to $Q(\uTheta)$, resp. $T(\uTheta)$,
 and can be swapped commutatively to their respective positions. Since
 reverse engineering of those steps is possible, we can derive the
 factorization $p_1 \cdots p_m$ from $Q(\uTheta) \cdot T(\uTheta)
 \cdot \uX^e \uD^w$ as claimed.
\end{proof}

Summarizing, we are now able to effectively factor graded polynomials in the
$n$th Weyl and $\underline{q}$-Weyl algebra. All different factorizations are obtainable using our
technique.

\begin{remark}
  A reader might ask what are the merits of our ``graded-driven''
  approach as opposed to a somewhat more direct approach to
  factorization using leading monomials.  Since, for monomials $m,
  m'\in A_n$, one has $\lexp(m \cdot m') = \lexp(m) + \lexp(m')$,
  indeed $h = p\cdot q$ implies $\lexp(p) + \lexp(q) = \lexp(h)$. Thus
  by considering, say, degree lexicographic ordering on $A_n$, the set
  $C_h := \{(a,b) \in \NN^n \times \NN^n : a, b \neq \uZero, \ a + b =
  \lexp(h) \}$ contains all possible pairs of leading monomials of $p$
  and $q$. Then, since with respect to the chosen ordering, for any
  monomials there are only finitely many smaller monomials, one can
  make an ansatz with unknown coefficients for $p$ and $q$. Each
  $(a,b) \in C_h$ leads to a system of nonlinear polynomial equations
  in finitely many variables.

  We compare this ``leading monomial'' approach with our ``graded-driven''
  one. At first, the factorization of a $\ZZ^n$-graded polynomial,
  which is very easy to accomplish with our approach, requires solving
  of several systems within the leading monomial approach. Second, the
  number of all elements in the set $C_h$ above is significantly
  bigger than the number of factorizations of the highest graded part
  of a polynomial, say ${\tilde p}(\uTheta)\uX^e \uD^w$: suppose that
  ${\tilde p}(\uTheta)$ is irreducible over $\KK[\uTheta]$. Then
  factorization with the ``graded-driven'' approach are obtained via
  moving $x$, resp. $\partial$, past ${\tilde p}(\uTheta)$ to the
  left. Thus the number of such factorizations is much smaller than
  the number of ways of writing the exponent vector of $\lm({\tilde
    p}(\uTheta)\uX^e \uD^w) = \uTheta^{\alpha} \uX^e \uD^w$ as a sum
  of two exponent vectors.
\end{remark}

In the next section, we show how the developed technique helps us to tackle
the factorization problem for arbitrary polynomials in $A_n$.

\section{Factoring Arbitrary \\ Polynomials}
\label{sctn:factArb}
\subsection{Preliminaries}

The techniques described in this section solve the factorization
problem in $A_n$. A generalization for $Q_n$ is more involved and the
subject of ongoing research.

We begin by fixing some notation used throughout
this section. From now on, let ``$<$'' be an ordering on $\ZZ^n$
satisfying the conditions of Remark \ref{rem:OrderingOnZn}.
Let $h \in A_n$ be the polynomial we want to factorize. As we are
deducing information from the graded summands of $h$, let furthermore $M
:= \{z^{(1)}, \ldots, z^{(m)}\}$, where $m\in \NN$ and $z^{(1)}> \ldots >
z^{(m)}$, be a finite subset of $\ZZ^n$ containing the degrees of
those graded summands. Hence, $h$ can be written in the
form $h =
\sum_{z \in M}h_z \in A_n$, where $h_z \in A_n^{(z)}$ for $z \in M$. Let us assume that $h$ possesses a nontrivial factorization
of at least two factors, which are not necessary
irreducible. Moreover, we assume that $m>1$, which means that $h$ is
not graded, since we have dealt with graded polynomials in $A_n$
already.  Let us denote the factors by
\begin{align}
  \label{eq:hFactor} 
  h = \sum_{z \in M}h_z:= \underbrace{(p_{\eta_1} + \ldots 
    + p_{\eta_k})}_{:=p}\underbrace{(q_{\mu_1}+ \ldots+
    q_{\mu_l})}_{:=q}, 
\end{align}
where $\eta_1>\eta_2>\ldots>\eta_k$ and $\mu_1>\mu_2>\ldots>\mu_l \in
\ZZ^n$, $p_{\eta_i} \in A_n^{(\eta_i)}$ for all $i\in \underline k$,
$q_{\mu_j}\in A_n^{(\mu_j)}$ for all $j \in \underline l$. We assume
that $p$ and $q$ are not graded, since we could easily
obtain those factors by simply comparing all factorizations of the
graded summands in $h$. In general, while trying to find a
factorization of $h$, we assume that the values of $k$ and $l$ are not
known to us beforehand. We will soon see how we can obtain
them. One can easily see that
\(
h_{z^{(1)}}=p_{\eta_1}q_{\mu_1}\) and \( h_{z^{(m)}} =
p_{\eta_k}q_{\mu_l},
\)
as the degree-wise biggest summand of $h$ can only be combined by 
multiplication of the highest summands of $p$ and $q$; analogously this
holds for the degree-wise lowest summand.

A finite set of candidates for $p_{\eta_1}, q_{\mu_1}, p_{\eta_k}$ and $q_{\mu_l}$ can be
obtained by factoring $h_{z^{(1)}}$ and $h_{z^{(m)}}$ using the
technique described in the previous section. Since the set of
candidates is finite, we can assume that the correct representatives
for $p_{\eta_1}, q_{\mu_1}, p_{\eta_k}$ and $q_{\mu_l}$ are known to
us. In practice, we would apply our method to all candidates and would
succeed in at least one case to factorize the polynomial due to our
assumption of $h$ being reducible.

One may ask now how many valid degrees could occur in summands of such
factors $p$ and $q$, i.e., what are the values of $l$ and
$k$. Theoretically, there exist choices for $\eta_1$ and $\eta_k$
(resp. $\mu_1$ and $\mu_l$) where there are infinitely many $z \in
\ZZ^n$ such that $\eta_1 > z > \eta_k$ (resp. $\mu_1 > z >
\mu_l$). Fortunately, only finitely many are valid degrees that can
appear in a factorization, as the next lemma shows.

\vspace*{-5pt}
\begin{lemma}
  \label{lem:finiteDegs}
  For fixed $h, p_{\eta_1}, q_{\mu_1}, p_{\eta_k}$ and $q_{\mu_l}\in
  A_n$ fulfilling the assumptions stated above, there are
  only finitely many possible $\eta_i$ resp. $\mu_j \in \ZZ^n, i,j \in
  \NN$, that can appear
  as degrees for graded summands in $p$ and $q$.
\end{lemma}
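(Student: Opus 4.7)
My plan is to exploit the ``interval finiteness'' property of the ordering $<$ on $\ZZ^n$ stipulated in Remark \ref{rem:OrderingOnZn}. Since the hypothesis fixes $p_{\eta_1}$ and $p_{\eta_k}$, the particular degrees $\eta_1$ and $\eta_k$ are also fixed; likewise $\mu_1$ and $\mu_l$ are determined by $q_{\mu_1}$ and $q_{\mu_l}$. Any intermediate $\eta_i$ (for $1 < i < k$) occurring as the degree of a graded summand of $p$ must satisfy $\eta_k < \eta_i < \eta_1$, because the $\eta_i$ are listed in strictly decreasing order; analogously each $\mu_j$ for $1 < j < l$ satisfies $\mu_l < \mu_j < \mu_1$.

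The assertion now reduces to an immediate application of the order-theoretic property: since $\eta_1$ and $\eta_k$ are fixed elements of $\ZZ^n$ with $\eta_k < \eta_1$, the set $\{z \in \ZZ^n : \eta_k < z < \eta_1\}$ is finite by the stipulated property of $<$. The same reasoning applies verbatim to the $\mu_j$. Consequently the number of possible degrees for graded summands of either $p$ or $q$ is bounded, and in particular the integers $k$ and $l$ are themselves bounded by the cardinality of these finite intervals plus two.

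I do not expect any substantive obstacle here, since the very purpose of insisting on such an ordering in Remark \ref{rem:OrderingOnZn} is to enable enumeration arguments of this kind. The only point to be careful about is that the statement concerns the \emph{set of possible degrees} across the (finite) factorization, rather than specific $\eta_i$ or $\mu_j$ arising in one chosen factorization; but as all such degrees are squeezed into the finite intervals above, the set of possible degrees is automatically finite.
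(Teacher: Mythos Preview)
Your argument is logically sound \emph{if} one takes the finite-interval property of Remark~\ref{rem:OrderingOnZn} at face value. The difficulty is that this property is not actually available, and the paper itself says so in the paragraph immediately preceding the lemma: ``Theoretically, there exist choices for $\eta_1$ and $\eta_k$ \ldots\ where there are infinitely many $z \in \ZZ^n$ such that $\eta_1 > z > \eta_k$.'' More fundamentally, for $n\geq 2$ no total order on $\ZZ^n$ can be both compatible with addition and have only finite intervals: finite intervals force the order to be Archimedean, hence (by H\"older's theorem) $\ZZ^n$ would order-embed into $\RR$; but every subgroup of $\RR$ of rank at least $2$ is dense, so its intervals are infinite. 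Thus the hypothesis you invoke cannot hold, and your argument does not establish the lemma for $n\geq 2$.

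The paper's proof bypasses the ordering entirely and bounds each coordinate of the $\eta_i$ and $\mu_j$ directly from $h$: for each variable $v\in\{x_1,\ldots,x_n,\partial_1,\ldots,\partial_n\}$ the maximal exponent of $v$ among the monomials of $h$ furnishes a bound on the corresponding coordinate (a lower bound when $v$ is an $x_t$, an upper bound when $v$ is a $\partial_t$), since a homogeneous summand of $p$ or $q$ violating this bound would force $v$ to occur in $h$ to a higher power than it does. This confines all admissible $\eta_i,\mu_j$ to a finite box in $\ZZ^n$, independently of any interval property of $<$. Example~\ref{ex:finiteDegs} makes this explicit: the restriction to a single possible intermediate degree between $[1,-1]$ and $[0,0]$ is obtained from the variable degrees in $h$, not from the ordering.
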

\begin{proof}
  For every variable $v\in \{x_1, \ldots, x_n, \partial_1,
  \ldots, \partial_n\} \subset A_n$, there exists a $j \in \NN_0$
  that represents the maximal degree of $v$ that occurs among the
  monomials in $h$. The number $j$ can be seen as a lower bound of the
  associated position of $v$ in $\eta_i$, resp. $\mu_i$, if $v$ is one
  of the $x$s, or as an upper bound if $v$ is one of the
  $\partial$s. If the degree of one of the homogeneous summands of $p$
  or $q$ would go higher, resp. lower, than this degree-bound, $v$ would
  appear in $h$ in a higher degree than $j$, which contradicts our
  choice of $j$. 
\end{proof}

\vspace*{-8pt}
\begin{example}
  \label{ex:finiteDegs}
 Let us consider
$$h = \underbrace{x_2\partial_1 \partial _2 + \partial_1}_{\text{degree: }[1,0]} + \underbrace{x_1
 x_2 \partial_1^2}_{\text{degree: } [1,-1]} +
\underbrace{4\partial_2}_{\text{degree: } [0,1]} +
\underbrace{4x_1\partial_1}_{\text{degree: } [0,0]} \in A_2.$$
One possible factorization of $x_2\partial_1 \partial _2 + \partial_1$ is
$\partial_2\cdot x_2\partial_1 =: p_{\eta_1} \cdot q_{\mu_1}$ and, on the other end,
one possible factorization of $4x_1\partial_1$ is $x_1 \partial_1\cdot 4
=: p_{\eta_k}\cdot q_{\mu_l}$. Concerning $p$, there are no elements
in $\ZZ^n$ that can occur between $\deg(p_{\eta_1}) = [0,1]$ and
$\deg(p_{\eta_k}) = [0,0]$; therefore we can set $k=2$. For $q$,
the only degree that can occur
between $\deg(q_{\mu_1}) = [1,-1]$  and $\deg(q_{\mu_l}) = [0,0]$
is $[0,1]$, as every variable except $\partial_1$ appears with maximal degree
$1$ in $h$. We have $l = 3$ in this case.
\end{example}



\subsection{Reduction to a Commutative System}

In the previous subsection we saw that, given $h \in A_n$ that
possesses a factorization as in (\ref{eq:hFactor}), we are able to
obtain the elements $p_{\eta_1}, q_{\mu_1}, p_{\eta_k}$ and
$q_{\mu_l}$. Moreover, we can compute the numbers $k$ and $l$ of homogeneous
summands in the factors. Now our goal is to find
values for the unknown homogeneous summands, i.e. $p_{\eta_2}, \ldots,
p_{\eta_{k-1}}, q_{\mu_2}, \ldots, q_{l-1}$. Our goal is to reduce
this to a commutative problem to the greatest extent we can.

For this, we use Proposition \ref{prop:A0completeWeylDescription}
and define for all $i \in \underline k$ the polynomial $\tilde
p_{\eta_i} \in A_n^{(\uZero)}$ by $\tilde p_{\eta_i}\uX^e\uD^w = p_{\eta_i}$. In the same way we define $\tilde q_{\mu_j}$ for
all $j \in \underline l$ and $\tilde h_z$ for $z \in M$.
The latter are known to us since $\tilde h_z$ can easily
be obtained from the input polynomial $h$. We can refer to the
$\tilde{h}_z$, $\tilde{p}_{\eta_i}$, $\tilde{q}_{\mu_j}$ as elements
in the commutative ring $\KK[\uTheta]$ using Lemma~\ref{lem:homogToTheta}.

The next fact about the degree of the remaining unknowns can be easily
proven and is
useful for our further steps.
\vspace*{-5pt}
\begin{lemma}
  \label{lem:degBoundTheta}
  The degree of the $\tilde p_{\eta_i}$ and the $\tilde q_{\mu_j}$,
  $(i,j) \in \underline{k} \times \underline{l}$, in
  $\theta_{t}$, $t \in \undlnn$, is bounded by
  $\min\{\deg_{x_t}(h),\deg_{\partial_t}(h) \}$, where $\deg_{v}(f)$
  denotes the degree of $f \in A_n$ in the variable $v$.
\end{lemma}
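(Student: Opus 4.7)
The plan is to prove, for each fixed $i\in \underline{k}$ (the argument for $\tilde q_{\mu_j}$ being symmetric), the chain of inequalities
\[
\deg_{\theta_t}(\tilde p_{\eta_i})\;\leq\;\deg_{x_t}(p_{\eta_i})\;\leq\;\deg_{x_t}(p)\;\leq\;\deg_{x_t}(h),
\]
and its analogue with $\partial_t$ in place of $x_t$; taking the minimum of the two rightmost quantities then yields the claim. The work splits naturally into three steps, one per inequality.

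For the first step, I would invoke Proposition~\ref{prop:A0completeWeylDescription} to write $p_{\eta_i}=\tilde p_{\eta_i}(\uTheta)\,\uX^{e(\eta_i)}\uD^{w(\eta_i)}$, and then expand each commutative monomial $\theta_1^{\alpha_1}\cdots\theta_n^{\alpha_n}$ of $\tilde p_{\eta_i}$ in the normal-form basis $\{\uX^a\uD^b\}$ of $A_n$. Reading the identity of Lemma~\ref{lem:homogToTheta} in reverse, $\theta_t^{\alpha_t}$ expands as $x_t^{\alpha_t}\partial_t^{\alpha_t}$ plus terms of strictly smaller $x_t$- and $\partial_t$-degree; applied separately in each index, this shows that the leading normal-form monomial of $\theta_1^{\alpha_1}\cdots\theta_n^{\alpha_n}$ is $\uX^{\alpha}\uD^{\alpha}$. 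Hence, viewing $\tilde p_{\eta_i}(\uTheta)$ inside $A_n$, one obtains $\deg_{x_t}(\tilde p_{\eta_i}(\uTheta))=\deg_{\partial_t}(\tilde p_{\eta_i}(\uTheta))=\deg_{\theta_t}(\tilde p_{\eta_i})$; right-multiplication by $\uX^{e(\eta_i)}\uD^{w(\eta_i)}$ then only adds the nonnegative amounts $e(\eta_i)_t$ and $w(\eta_i)_t$ to these two degrees respectively, yielding $\deg_{\theta_t}(\tilde p_{\eta_i})\leq\deg_{x_t}(p_{\eta_i})$ and $\deg_{\theta_t}(\tilde p_{\eta_i})\leq\deg_{\partial_t}(p_{\eta_i})$.

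The second inequality rests on the observation that the summands $p_{\eta_i}$ live in pairwise distinct graded parts $A_n^{(\eta_i)}$, and that the normal-form monomial basis of $A_n$ is naturally partitioned by the grading, since $\uX^a\uD^b \in A_n^{(b-a)}$. Consequently, $p=\sum_i p_{\eta_i}$ involves no cancellation between different $p_{\eta_i}$ in normal form, so each monomial of any single $p_{\eta_i}$ reappears with the same coefficient in $p$; this gives $\deg_{x_t}(p_{\eta_i})\leq \deg_{x_t}(p)$ and similarly for $\partial_t$.

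For the third inequality I would invoke the standard multiplicativity of $\deg_{x_t}$ and $\deg_{\partial_t}$ on $A_n$. The filtration by $x_t$-degree is well-defined because the commutator $[\partial_t,x_t]=1$ has strictly lower $x_t$-degree than $\partial_t x_t$ or $x_t\partial_t$, and in the associated graded ring $x_t$ and $\partial_t$ become commuting variables; the resulting ring is a polynomial ring in $x_t,\partial_t$ over the Weyl algebra in the remaining indices, and hence a domain. Therefore $\deg_{x_t}(pq)=\deg_{x_t}(p)+\deg_{x_t}(q)\geq\deg_{x_t}(p)$, completing the chain since $h=pq$; the argument for $\partial_t$ is identical. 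The main technical subtlety of the proof lies precisely here: the Leibniz-type commutation $\partial_t^a x_t^b=\sum_\ell \binom{a}{\ell}\binom{b}{\ell}\ell!\,x_t^{b-\ell}\partial_t^{a-\ell}$ could a priori cause the leading $x_t$-terms coming from different monomial products to cancel in $pq$, and the domain property of the associated graded ring is exactly what rules this out.
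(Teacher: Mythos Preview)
Your argument is correct. The chain of inequalities you set up is exactly the right decomposition, and each step is properly justified: the triangular change of basis between $\{\theta^\alpha\}$ and $\{\uX^a\uD^a\}$ gives the equality $\deg_{\theta_t}(\tilde p_{\eta_i})=\deg_{x_t}(\tilde p_{\eta_i}(\uTheta))$; the grading-disjointness of the $p_{\eta_i}$ prevents cancellation in $p$; and the associated-graded argument for multiplicativity of $\deg_{x_t}$ (and $\deg_{\partial_t}$) on $A_n$ is the standard and correct way to handle the third step.

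As for comparison with the paper: there is nothing to compare. The paper does not prove Lemma~\ref{lem:degBoundTheta} at all; it merely states that the fact ``can be easily proven'' and moves on. Your write-up therefore supplies what the paper omits, and does so at a level of care (particularly in flagging the potential cancellation issue in the third step and resolving it via the domain property of the associated graded ring) that goes beyond what ``easily proven'' suggests.
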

\vspace*{-5pt}

There are certain equations that the $\tilde p_{\eta_i}$ and the
$\tilde q_{\mu_j}$ must fulfil in order for $p$ and $q$ to be
factors of $h$.
\vspace*{-5pt}
\begin{definition}
  For $\alpha,\beta \in \ZZ^n$ we define
  \(
  \gamma_{\alpha,\beta} = \prod_{\kappa = 1}^n \tilde
  \gamma_{\alpha_\kappa,\beta_\kappa}^{(\kappa)}
  \);
  in the latter expression we define for $a, b \in \ZZ$ and $\kappa
  \in \undlnn$
  {{
  \[
  \tilde \gamma_{a,b}^{(\kappa)} := 
  \begin{cases}
    1, &\text{if } a,b\geq0 \lor a,b \leq
    0,\\ \prod_{\tau = 0}^{|a|-1}(\theta_{\kappa} - \tau), & \text{if } a<0,
    b>0, |a|\leq|b|,\\ \prod_{\tau = 0}^{|b|-1}(\theta_{\kappa} - \tau
    -|a|+|b|), & \text{if } a<0, b>0, |a|>|b|,\\ \prod_{\tau =
      1}^{a}(\theta_{\kappa} + \tau),& \text{if } a>0, b<0, |a| \leq |b|,\\
    \prod_{\tau = 1}^{|b|}(\theta_{\kappa} + \tau+|a|-|b|),& \text{if } a>0,
    b<0, |a| > |b|.
  \end{cases}
  \]
  }}
\end{definition}

\begin{theorem}
  \label{MainThm}
  Suppose that, with the notation as above, we have $h = pq$ and
  $\tilde p_{\eta_1}, \tilde q_{\mu_1}, \tilde p_{\eta_k}, \tilde h_{z^{(1)}}, \ldots,
  \tilde h_{z^{(m)}}$ are known. Define $\tilde h_{z} := 0$ for $z^{(1)}
  > z > z^{(m)}$ and $z \not\in M$. Then the remaining
  unknown $\tilde p_{\eta_2}, \ldots , \tilde
  {p}_{\eta_{k-1}},$ $ \tilde q_{\mu_2}, \ldots , \tilde
  {q}_{\mu_{l-1}}$ are solutions of the following finite set of
  equations:
  \begin{align}
    \label{eq:diffForSolve}
    \Biggl\{ & \sum_{\lambda, \varrho \in \underline k \times
      \underline l \atop \eta_{\lambda}+\mu_{\varrho} = z}
    \tilde{p}_{\eta_{\lambda}}(\uTheta)\tilde{q}_{\mu_{\varrho}}(\theta_1
    +(\eta_{\lambda})_1, \ldots, \theta_n +
    (\eta_{\lambda})_n)\gamma_{\eta_{\lambda}, \mu_{\varrho}}=\tilde h_z
    \notag\\[-8pt]
    &\quad |\quad z \in \ZZ^n,
    z^{(1)} \geq z \geq z^{(m)} \Biggr \}.
  \end{align}
  Moreover, a factorization of $h$ in $A_1$ corresponds to $\tilde
  q_{\mu_i}$ and $\tilde p_{\eta_j}$ for $(i,j) \in
  \underline{k}\times \underline{l}$ being polynomial solutions with
  bounds as stated in Lemma~\ref{lem:degBoundTheta}.
\end{theorem}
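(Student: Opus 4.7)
The plan is to equate graded components on both sides of $h = pq$ after putting every graded piece into the canonical ``$\tilde{f}(\uTheta)\,\uX^{e(z)}\uD^{w(z)}$'' form from Proposition \ref{prop:A0completeWeylDescription}, then read off the system (\ref{eq:diffForSolve}) by pushing all $\uTheta$-polynomials to the left via Corollary \ref{cor:thetaswap} and collapsing the leftover monomial using Lemma \ref{lem:homogToTheta}. Expanding the product yields
\(
pq = \sum_{\lambda,\varrho} p_{\eta_\lambda} q_{\mu_\varrho},
\)
and since $p_{\eta_\lambda} q_{\mu_\varrho} \in A_n^{(\eta_\lambda+\mu_\varrho)}$ by the grading, grouping the summands by total degree and comparing with the degree-$z$ part of $h$, namely $\tilde{h}_z(\uTheta)\,\uX^{e(z)}\uD^{w(z)}$ (extending $\tilde{h}_z$ by zero on missing degrees), reduces the whole task to finding a normal form for a single bilinear piece $p_{\eta_\lambda}q_{\mu_\varrho}$.

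For that piece, I would write $p_{\eta_\lambda} = \tilde{p}_{\eta_\lambda}(\uTheta)\,\uX^{e(\eta_\lambda)}\uD^{w(\eta_\lambda)}$ and $q_{\mu_\varrho} = \tilde{q}_{\mu_\varrho}(\uTheta)\,\uX^{e(\mu_\varrho)}\uD^{w(\mu_\varrho)}$, then commute $\tilde{q}_{\mu_\varrho}(\uTheta)$ to the left past $\uX^{e(\eta_\lambda)}\uD^{w(\eta_\lambda)}$. Corollary \ref{cor:thetaswap} shifts $\theta_\kappa \mapsto \theta_\kappa - e_\kappa(\eta_\lambda)$ when moving past an $\uX$ and $\theta_\kappa \mapsto \theta_\kappa + w_\kappa(\eta_\lambda)$ past a $\uD$; since the supports of $e(\eta_\lambda)$ and $w(\eta_\lambda)$ are disjoint the net shift in each coordinate is $w_\kappa(\eta_\lambda) - e_\kappa(\eta_\lambda) = (\eta_\lambda)_\kappa$, producing exactly the $\tilde{q}_{\mu_\varrho}(\theta_1+(\eta_\lambda)_1,\ldots,\theta_n+(\eta_\lambda)_n)$ that appears in (\ref{eq:diffForSolve}).

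The remaining monomial $\uX^{e(\eta_\lambda)}\uD^{w(\eta_\lambda)}\uX^{e(\mu_\varrho)}\uD^{w(\mu_\varrho)}$ has degree $\eta_\lambda + \mu_\varrho$, so by Proposition \ref{prop:A0completeWeylDescription} it equals $\gamma(\uTheta)\,\uX^{e(\eta_\lambda+\mu_\varrho)}\uD^{w(\eta_\lambda+\mu_\varrho)}$ for a unique $\gamma \in \KK[\uTheta]$; because the variables for distinct indices $\kappa$ commute, $\gamma = \prod_{\kappa=1}^n \tilde{\gamma}^{(\kappa)}_{(\eta_\lambda)_\kappa,(\mu_\varrho)_\kappa}$. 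The univariate factor $\tilde{\gamma}^{(\kappa)}_{a,b}$ is then identified by a routine sign analysis on $(a,b)$ using Lemma \ref{lem:homogToTheta} together with its dual identity $\partial_\kappa^{r}x_\kappa^{r} = \prod_{\tau=1}^{r}(\theta_\kappa+\tau)$ (a quick induction from $\partial_\kappa x_\kappa = \theta_\kappa + 1$). Matching the five cases where $a,b$ are both nonnegative, both nonpositive, or of opposite signs with $|a|\lessgtr|b|$ reproduces the definition in the statement; this case distinction is the most tedious but entirely mechanical step, and is the main obstacle only in its bookkeeping.

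Finally, summing contributions with $\eta_\lambda + \mu_\varrho = z$ and equating the $\KK[\uTheta]$-coefficients of the generator $\uX^{e(z)}\uD^{w(z)}$ with $\tilde{h}_z(\uTheta)$ produces precisely the equations in (\ref{eq:diffForSolve}). The set is finite because $z$ ranges over $\{w \in \ZZ^n : z^{(m)} \leq w \leq z^{(1)}\}$, which is finite by Remark \ref{rem:OrderingOnZn}. Conversely, any polynomial tuple $(\tilde{p}_{\eta_\lambda},\tilde{q}_{\mu_\varrho})$ satisfying this system reassembles via Proposition \ref{prop:A0completeWeylDescription} into factors whose product has the correct graded pieces; Lemma \ref{lem:finiteDegs} bounds the set of admissible degrees $\eta_\lambda, \mu_\varrho$, and Lemma \ref{lem:degBoundTheta} bounds the $\theta_t$-degrees of the unknowns, so the search space for factorizations of $h$ in $A_n$ is truly finite, proving the final assertion.
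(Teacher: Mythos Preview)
Your proof is correct and follows essentially the same approach as the paper: split $pq$ into its graded summands, push the $\uTheta$-polynomials to the left via the commutation rules (the paper cites Lemma~\ref{lem:rewriteKTheta}, you its corollary), and compare coefficients in $\KK[\uTheta]$ after collapsing the residual monomial with Lemma~\ref{lem:homogToTheta}. In fact you supply considerably more detail than the paper, which only sketches the argument; in particular your explicit identification of the shift $\theta_\kappa \mapsto \theta_\kappa + (\eta_\lambda)_\kappa$ and your coordinate-wise derivation of $\gamma_{\eta_\lambda,\mu_\varrho}$ are exactly the computations the paper alludes to but omits.
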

\begin{proof}
  We only sketch this technical proof.  Inspecting the product in
  \eqref{eq:diffForSolve}, we split it into its graded
  summands.  By repeated application of Lemma \ref{lem:rewriteKTheta},
  we arrive at the described set of equations via coefficient
  comparison. The degree bound has been established in Lemma~\ref{lem:degBoundTheta} above.
\end{proof}
\vspace*{-5pt}
\begin{corollary}
  The problem of factorizing a polynomial in the $n$th Weyl algebra
  can be solved via finding polynomial univariate solutions of degree
  at most $2\cdot \sum_{i =0}^{n} |\deg(h)_i|$ for a system of
  difference equations with polynomial coefficients, involving linear
  and quadratically nonlinear inhomogeneous equations.
\end{corollary}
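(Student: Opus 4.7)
The plan is to read the corollary off as a structural consequence of Theorem \ref{MainThm}, by classifying the shape and the size of the system (\ref{eq:diffForSolve}) that the theorem produces. First I would observe that by Remark \ref{rem:OrderingOnZn} the index set $\{z \in \ZZ^n : z^{(1)} \geq z \geq z^{(m)}\}$ is finite, and by Lemma \ref{lem:finiteDegs} the sets of possible degrees $\eta_i$ and $\mu_j$ appearing in the factors are finite as well. Consequently the system produced by the theorem has finitely many equations in finitely many unknown polynomials $\tilde p_{\eta_i}, \tilde q_{\mu_j} \in \KK[\uTheta]$.

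Next, to identify the \emph{type} of equations, I would examine a single equation: for a fixed $z$, the left-hand side is a sum over pairs $(\lambda,\varrho)$ with $\eta_\lambda+\mu_\varrho=z$ of terms of the form $\tilde p_{\eta_\lambda}(\uTheta)\,\tilde q_{\mu_\varrho}(\uTheta+\eta_\lambda)\,\gamma_{\eta_\lambda,\mu_\varrho}(\uTheta)$. The coefficient $\gamma_{\eta_\lambda,\mu_\varrho}$ is, by its very definition, a product of linear factors in the $\theta_\kappa$ and hence a polynomial in $\KK[\uTheta]$; the substitution $\uTheta \mapsto \uTheta+\eta_\lambda$ is a shift in the classical sense, which is precisely why the system deserves the name ``difference equations with polynomial coefficients''.

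Since the extremal summands $\tilde p_{\eta_1}, \tilde q_{\mu_1}, \tilde p_{\eta_k}, \tilde q_{\mu_l}$ are already determined by factoring the highest and lowest graded components of $h$ via the methodology of Section \ref{sctn:factGraded}, I would then partition each summand in each equation into three mutually exclusive cases: (i) both $\tilde p_{\eta_\lambda}$ and $\tilde q_{\mu_\varrho}$ are known, in which case the product is moved to the right-hand side and contributes to the inhomogeneous term; (ii) exactly one of them is unknown, producing a term linear in the unknowns; (iii) both are unknown, producing a bilinear, i.e.\ quadratically nonlinear, term. Case (iii) can only occur when neither index lies on the ``boundary'' $\{1,k\}$ or $\{1,l\}$. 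This delivers the claimed shape of the system: inhomogeneous, with linear and quadratic terms only.

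Finally, the degree bound would follow from Lemma \ref{lem:degBoundTheta}, which gives $\deg_{\theta_t}(\tilde p_{\eta_i}), \deg_{\theta_t}(\tilde q_{\mu_j}) \leq \min\{\deg_{x_t}(h), \deg_{\partial_t}(h)\} \leq |\deg(h)_t|$ for each $t \in \undlnn$. Summing over $t$ bounds the total degree of any single unknown by $\sum_t |\deg(h)_t|$, and a quadratic product of two such unknowns therefore has total degree at most $2\sum_t |\deg(h)_t|$, matching the stated bound. The only real subtlety, which is actually the content of Theorem \ref{MainThm} rather than of this corollary, lies in verifying that the grade-by-grade coefficient comparison produces exactly the summation in (\ref{eq:diffForSolve}); once that is granted, the corollary is a direct reading-off of the shape and size of that system.
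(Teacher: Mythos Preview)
Your approach---reading the corollary directly off the structure of the system in Theorem~\ref{MainThm}---is exactly what the paper intends; the paper offers no separate proof for this corollary, treating it as an immediate consequence of the theorem together with Lemma~\ref{lem:degBoundTheta}. Your analysis of finiteness, of the difference-equation shape (shifts $\uTheta\mapsto\uTheta+\eta_\lambda$ against polynomial coefficients $\gamma_{\eta_\lambda,\mu_\varrho}$), and of the trichotomy producing inhomogeneous, linear, and bilinear terms is correct and matches the intended reading.

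One step does not close, however. You assert $\min\{\deg_{x_t}(h),\deg_{\partial_t}(h)\}\leq|\deg(h)_t|$, but this fails in general: take $h=x_1^3\partial_1^3+x_1\in A_1$, where the minimum is $3$ while the graded summands have $\ZZ$-degrees $0$ and $-1$, so no interpretation of $|\deg(h)_1|$ as a component of a $\ZZ^n$-grading gives a value $\geq 3$. Part of the difficulty is the paper's own imprecision---$\deg(h)$ is never defined for non-homogeneous $h$, and the word ``univariate'' together with the summation index $i=0$ suggest the corollary is informally stated. Your structural argument is sound, but to recover the specific numerical bound you would need either to adopt a different reading of $|\deg(h)_i|$ (for instance $\max\{\deg_{x_i}(h),\deg_{\partial_i}(h)\}$, under which $\min\leq\max$ is immediate) or simply to invoke Lemma~\ref{lem:degBoundTheta} as furnishing \emph{some} explicit bound, without trying to match the particular expression in the corollary.
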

\vspace*{-3pt}

As this part of the method is rather technical, let us illustrate it via
an example.

\vspace*{-5pt}
\begin{example}
  \label{ex:runningEx1}
  Let
  \begin{eqnarray}
    p &:=& \underbrace{\theta_1 \partial_2}_{=p_{[0,1]}} +
    \underbrace{(\theta_1 + 3)\theta_2}_{=p_{[0,0]}} +
    \underbrace{x_2}_{=p_{[0,-1]}},\nonumber\\
    q &:=&\underbrace{(\theta_1+4)x_1\partial_2}_{=q_{[-1,1]}} +
    \underbrace{x_1}_{=q_{[-1,0]}}+\underbrace{(\theta_1+1)x_1x_2}_{=q_{[-1,-1]}}
    \in A_2 \text{
      and} \nonumber
    \end{eqnarray}
    \begin{eqnarray}
    h &:=& pq=\theta_1(\theta_1+4)x_1\partial_2^2 \label{eq:hPart1} \\
    &+& (\theta_1(\theta_1-1)\theta_2 + 8\theta_1\theta_2 + \theta_1 +
    12\theta_2)x_1\partial_2  \label{eq:hPart2}\\
    &+& (\theta_1(\theta_1-1)\theta_2 + \theta_1^2 -\theta_1 +
    4\theta_1\theta_2 + 2\theta_1 + 7\theta_2)x_1\label{eq:hPart3}\\
    &+& (\theta_1(\theta_1-1)\theta_2 + 5\theta_1\theta_2 + 3\theta_2 +
    1)x_1x_2 \label{eq:hPart4}\\
    &+& (\theta_1+1)x_1x_2^2 \label{eq:hPart5}.
  \end{eqnarray}

  We have written every coefficient in terms of the $\theta_i$ already for better
  readability.

  By assumption, the only information we have about $p$ and $q$
  are the values of $p_{[0,1]} =:p_{\eta_1}$,
  $p_{[0,-1]}=:p_{\eta_3}$, $q_{[-1,1]}=:q_{\mu_1}$ and $q_{[-1,-1]} =:q_{\mu_l}$.
  Thus we have, using the above notation,
    $\tilde p_{\eta_1} = \theta_1$,  $\tilde p_{\eta_k} = 1$,
    $\tilde q_{\mu_1} = (\theta_1+4)$ and $\tilde q_{\mu_l} =
  (\theta_1+1)$.
  We set $k := l:= 3$, and it remains to solve for $\tilde{q}_{[-1,0]}$ and $\tilde{p}_{[0,0]}$.

  In $h$, every variable appears in degree 2, except from $x_1$, which
  appears in degree 3. That means that the degree bounds for
  $\theta_1$ and $\theta_2$ in $\tilde q_{\mu_i}$ can be set to be
  two. 

  The product of $(p_{\eta_1} + p_{\eta_2} + p_{\eta_3})(q_{\mu_1} +
  q_{\mu_2} + q_{\mu_3})$ with known values inserted is
  \begin{eqnarray}
    &&  \theta_1(\theta_1 + 4)x_1\partial_2^2 \label{eq:pqPart1}\\
    &+& (\theta_1 \tilde q_{\mu_2}  (\theta_1,\theta_2+1) + \tilde
    p_{\eta_2} (\theta_1+4))x_1\partial_2\label{eq:pqPart2}\\
    &+& (\theta_1(\theta_1+1)(\theta_2 +1) + (\theta_1 + 4)\theta_2 +
    \tilde p_{\eta_2} \tilde q_{\mu_2} )x_1\label{eq:pqPart3}\\
    &+& (\tilde q_{\mu_2} (\theta_1,\theta_2-1) + \tilde
    p_{\eta_2}(\theta_1+1))x_1 x_2\label{eq:pqPart4}\\
    &+& (\theta_1+1)x_1x_2^2\label{eq:pqPart5}.
  \end{eqnarray}

  The coefficients in $\KK[\uTheta]$ in the terms
  (\ref{eq:pqPart1})-(\ref{eq:pqPart5}) have to coincide with the
  respective coefficients in the terms
  (\ref{eq:hPart1})-(\ref{eq:hPart5}) for the factorization to be
  correct. The equations with respect to those coefficents are exactly
  the ones given in (\ref{eq:diffForSolve}).
\end{example}

\subsection{Determining the Rest of the Graded Parts}

There are many ways of dealing with finding solutions for the
system as described by the set (\ref{eq:diffForSolve}). The first way
would be to solve the appearing partial difference equations and
derive polynomial solutions. To the best of our knowledge, there is no
general algorithm for finding polynomial solutions of a system of nonlinear
difference equations (\cite{Abramov:1989}, \cite{Abramov:2012} and
\cite{Abramov:1995}). However, by Theorem \ref{MainThm} and Lemma
\ref{lem:degBoundTheta}, we are looking for bounded solutions, where
explicit bounds are given. This problem is clearly algorithmically
solvable.


Here, we present one of the possible approaches to solve the
commutative system of equations, which we also chose for the implementation.
 We give an
outline of the basic ideas here. A detailed description and discussion
will become subject of a journal version of this paper.

We begin by studying the equations as given in Theorem~\ref{MainThm}.

\vspace*{-5pt}
\begin{lemma}
  \label{lem:equationStudy}
  Let us sort the equations as given in the set stated in
  (\ref{eq:diffForSolve}) by the degree of the graded part they
  represent, from highest to lowest. Let moreover $\nu \in \NN$ be the
  number of those equations, and $\kappa$ be the number of all unknowns.
  We define $\chi_i$ for $i \in
  \underline{\nu}$ to be the number of $\tilde p_{\eta_\kappa}$ and $\tilde
  q_{\mu_\iota}$, $(\kappa, \iota) \in \underline{l}\times
  \underline{k}$,  appearing in equations
  $1,\ldots, i$. Then we have, for $i \leq \lceil \kappa/2 \rceil$,
  $\chi_i = 2\cdot (i-1)$.
  The same holds if we sort the equations from lowest to highest.
\end{lemma}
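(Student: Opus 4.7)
The plan is to prove the equality $\chi_i = 2(i-1)$ by induction on $i$, analyzing which unknowns are introduced at each equation. For the base case $i=1$, the highest-degree equation sits at $z^{(1)} = \eta_1 + \mu_1$. The strict monotonicity $\eta_1 > \eta_2 > \ldots$ and $\mu_1 > \mu_2 > \ldots$ forces the only contributing pair in \eqref{eq:diffForSolve} to be $(1,1)$, whose factors $\tilde p_{\eta_1}$ and $\tilde q_{\mu_1}$ are among the knowns; hence $\chi_1 = 0 = 2(1-1)$.

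For the inductive step, assume $\chi_{i-1} = 2(i-2)$ for $2 \leq i \leq \lceil \kappa/2 \rceil$ and consider equation $i$ at degree $z^{(i)}$, with contributing pair-set $\{(\lambda, \varrho) : \eta_\lambda + \mu_\varrho = z^{(i)}\}$. Because the $\eta$'s and $\mu$'s are strictly decreasing, listing these pairs in increasing $\lambda$ yields decreasing $\varrho$; I denote by $(\lambda_\bullet, \varrho^\bullet)$ the one with smallest $\lambda$ (hence largest $\varrho$) and by $(\lambda^\bullet, \varrho_\bullet)$ the one with smallest $\varrho$ (hence largest $\lambda$). The key claim is that these two extreme pairs contribute precisely the new unknowns, namely $\tilde q_{\mu_{\varrho^\bullet}}$ and $\tilde p_{\eta_{\lambda^\bullet}}$: if $\tilde q_{\mu_{\varrho^\bullet}}$ had appeared earlier through some pair $(\alpha, \varrho^\bullet)$, then $\eta_\alpha + \mu_{\varrho^\bullet} > z^{(i)}$ would force $\eta_\alpha > \eta_{\lambda_\bullet}$, contradicting the minimality of $\lambda_\bullet$; a symmetric argument treats $\tilde p_{\eta_{\lambda^\bullet}}$. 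Every ``interior'' pair $(\lambda', \varrho')$ strictly between the two extremes satisfies $\eta_{\lambda'} + \mu_1 > z^{(i)}$ (since $\mu_1 > \mu_{\varrho'}$), so the pair $(\lambda', 1)$ belongs to an equation of higher degree and hence $\tilde p_{\eta_{\lambda'}}$ was already counted in $\chi_{i-1}$; likewise $\tilde q_{\mu_{\varrho'}}$. Therefore $\chi_i = \chi_{i-1} + 2 = 2(i-1)$.

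The upper bound $i \leq \lceil \kappa/2 \rceil$ is needed to guarantee that the fresh extreme indices $\varrho^\bullet$ and $\lambda^\bullet$ still lie inside the unknown ranges $\{2, \ldots, l-1\}$ and $\{2, \ldots, k-1\}$, rather than degenerating into the known endpoints $l$ or $k$; beyond that threshold an extreme pair could collapse into a known factor and contribute fewer than two new unknowns. The symmetric statement for sorting from lowest to highest follows by reversing the ordering on $\ZZ^n$ and running the same argument with $(\eta_k, \mu_l)$ in place of $(\eta_1, \mu_1)$. The main obstacle, I expect, is the combinatorial bookkeeping for the degenerate case in which many pairs $(\lambda, \varrho)$ land at a common degree $z^{(i)}$: one must verify that in the claimed range the two extremes really are of the form $(1, \varrho^\bullet)$ and $(\lambda^\bullet, 1)$ (i.e., equation $i$ still reaches the boundaries $\lambda=1$ and $\varrho=1$) and that they contribute genuinely distinct new unknowns -- precisely what the bound on $i$ is designed to secure.
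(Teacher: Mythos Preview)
Your approach is essentially the same as the paper's: both argue by induction on $i$, with the identical base case $\chi_1=0$, and both claim that each successive equation introduces exactly two fresh unknowns. The paper's version is considerably terser than yours --- it simply asserts that in the second equation ``only the directly next lower unknowns $\tilde p_{\eta_2}$ and $\tilde q_{\mu_2}$ appear, multiplied by $\tilde q_{\mu_1}$ resp.\ $\tilde p_{\eta_1}$'' and then states that ``this process can be iterated until $\chi_{\lceil\kappa/2\rceil}=\kappa$,'' without the extreme-pair analysis you give.

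One remark on your write-up: the phrase ``contradicting the minimality of $\lambda_\bullet$'' only yields a contradiction once you already know $\lambda_\bullet=1$ (otherwise $\alpha<\lambda_\bullet$ is perfectly possible, since $\lambda_\bullet$ is minimal only among pairs landing at degree $z^{(i)}$, not globally). You are right to flag in your final paragraph that the argument hinges on the extremes having the form $(1,\varrho^\bullet)$ and $(\lambda^\bullet,1)$; that is exactly the point where both your proof and the paper's sketch are incomplete, and the paper does not close this gap either.
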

\begin{proof}
  The proof of this statement can be obtained using induction on
  $i$. We outline the main idea here. For $i=1$, we have the known
  equation $\tilde h_{z^{(1)}} = p_{\eta_1}q_{\mu_1} = \tilde
  p_{\eta_1}\tilde q_{\mu_1} (\theta_1 + (\eta_1)_1, \ldots,
  \theta_n + (\eta_1)_n)\gamma_{\eta_1, \mu_1}$, i.e. $\chi_1 =
  0$. For the next equation, as we regard the directly next lower
  homogeneous summand, only the directly next lower unknowns $\tilde
  p_{\eta_2}$ and $\tilde q_{\mu_2}$ appear, multiplied by
  $\tilde q_{\mu_1}$ resp. $\tilde p_{\eta_1}$. Hence, we get $\chi_2
  = 2$. This process can be iterated until $\chi_{\lceil \kappa/2
    \rceil} = \kappa$. An analogous argument can be used when the
  equations are sorted from lowest to highest.
\end{proof}

Using Lemma \ref{lem:equationStudy}, we can reduce the unknowns we
need to solve for to the $\tilde q_{\mu_i}$. Sorting the equations in
the set (\ref{eq:diffForSolve}) from highest to lowest, we can
rearrange them by putting the $\tilde p_{\eta_i}$ on the left hand
side and backwards substituting the appearing $\tilde p_{\eta_j}$ on
the respective right hand side by the formulae in the former
equations. The same can be done when sorting the equations from lowest
to highest, which lead to a second -- different -- set of
equations for the $\tilde p_{\eta_i}$. The remaining step is then to
concatenate the two respective descriptions for the $\tilde
p_{\eta_i}$ and then solve the resulting nonlinear system of equations
in the coefficients of the $\tilde q_{\mu_j}$ using e.g. Gr\"obner
bases \cite{Buchberger:1997}. We depict this process in the next
example.

\vspace*{-5pt}
\begin{example}
  \label{ex:finalInhomogEx}
  Let us consider $h = pq$ from Example \ref{ex:runningEx1}, using all
  notations that were introduced there.
  
   We assume that the given form of $\tilde p_{\eta_2}$ is
    $\tilde p_{\eta_2}  =  \tilde p_{\eta_2}^{(0)} + \tilde p_{\eta_2}^{(1)}\theta_1 +
    \tilde p_{\eta_2}^{(2)}\theta_1^2 + \tilde p_{\eta_2}^{(3)}\theta_2 +
    \tilde p_{\eta_2}^{(4)}\theta_1\theta_2
    + \tilde p_{\eta_2}^{(5)}\theta_1^2\theta_2 + \tilde p_{\eta_2}^{(6)}\theta_2^2 +
    \tilde p_{\eta_2}^{(7)}\theta_1\theta_2^2 +
    \tilde p_{\eta_2}^{(8)}\theta_1^2\theta_2^2,$
  and that $\tilde q_{\mu_2}$ has an analogous shape with coefficients
  $q_{\mu_2}^{(i)}$, where $\tilde p_{\eta_2}^{(i)}, \tilde q_{\mu_2}^{(i)} \in \KK$ for
  $i \in \underline{8}\cup \{0\}$.

  We use our knowledge of the form of $h$ and the product of $pq$ with
  unknowns as
  depicted (\ref{eq:pqPart1})-(\ref{eq:pqPart5}). Therefore, starting from the top and starting from the
  bottom, we obtain two expressions of $\tilde p_{\eta_2}$, namely
  \begin{eqnarray*}
  \tilde p_{\eta_2} &=& \frac{\theta_1(\theta_1-1)\theta_2 + 8\theta_1\theta_2 + \theta_1 +
  12\theta_2 - \theta_1 \tilde q_{\mu_2} 
  (\theta_1,\theta_2+1)}{\theta_1+4}\\
&=& \frac{\theta_1(\theta_1-1)\theta_2 + 5\theta_1\theta_2 + 3\theta_2 +
  1 - \tilde q_{\mu_2} (\theta_1,\theta_2-1)}{\theta_1+1}.
\end{eqnarray*}
  Thus, $\tilde q_{\mu_2}$ has to fulfil the equation
  \begin{eqnarray*}
    (\theta_1(\theta_1-1)\theta_2 + 8\theta_1\theta_2 + \theta_1 +
  12\theta_2 - \theta_1 \tilde q_{\mu_2} 
  (\theta_1,\theta_2+1))(\theta_1+1)\\
  = (\theta_1(\theta_1-1)\theta_2 + 5\theta_1\theta_2 + 3\theta_2 +
  1 - \tilde q_{\mu_2} (\theta_1,\theta_2-1))(\theta_1+4).
  \end{eqnarray*}

  Note here, that we could consider more equations that $\tilde
  q_{\mu_2}$ must fulfill, but we refrained from it in this example
  for the sake of brevity.

  Using coefficient comparison, one can form from this equation a
  nonlinear system of equations with the $\tilde q_{\mu_2}^{(i)}$,
  $i \in \underline{8}\cup \{0\}$, as indeterminates.
  The reduced Gr\"obner basis of this system is
  $\{\tilde q_{\mu_2}^{(0)}-1, \tilde q_{\mu_2}^{(1)}, \tilde q_{\mu_2}^{(2)},
  \ldots, \tilde q_{\mu_2}^{(8)} \},$
  which tells us, that $\tilde q_{\mu_2} =1$ and hence, 
  $\tilde p_{\eta_2} = (\theta_1+3)\theta_2$.
  Thus, we have exactly recovered both $p$ and $q$ in the factorization of $h$.
 The concrete original system is stated in Appendix \ref{app:polySystem}.
\end{example}

This approach of course raises the question, if those systems of
equations that we construct are over- resp. underdetermined. In the
latter case, we might end up with some ambiguity regarding the
solutions of the systems. The next lemma will show that our
construction in fact leads to an overdetermined system.
\vspace*{-5pt}
\begin{lemma} Let $\nu$ denote amount of the vectors in $\NN_0^n$,
that are in each component $t$ smaller or equal to \linebreak
$\min\{\deg_{x_t}(h),\deg_{\partial_t}(h) \}$. After the
reduction of the unknowns to the $\tilde q_{\mu_i}$ for $i \in
\{2,\ldots, l-1\}$, the amount of equations satisfied by the $\tilde
q_{\mu_i}$ will be between $2\cdot (l-1)\cdot \nu$ and $(l-1)^2\cdot
\nu$, and the amount of variables that we have to solve for is $(l-2)
\cdot \nu$.
\end{lemma}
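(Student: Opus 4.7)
The plan is to separately count the scalar unknowns and the scalar equations appearing after the reduction described in the preceding subsection, and then combine the two.

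For the variable count, I would apply Lemma \ref{lem:degBoundTheta} directly: each $\tilde q_{\mu_i}$ with $i \in \{2, \ldots, l-1\}$ is a polynomial in $\KK[\uTheta]$ whose degree in $\theta_t$ is bounded by $\min\{\deg_{x_t}(h), \deg_{\partial_t}(h)\}$ for every $t \in \undlnn$. The number of monomials $\uTheta^\alpha$ compatible with these degree bounds is exactly $\nu$ by definition, so each $\tilde q_{\mu_i}$ carries $\nu$ scalar coefficients, giving $(l-2)\nu$ scalar unknowns in total.

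For the equation count, I would first enumerate the polynomial equations of the system (\ref{eq:diffForSolve}): they are indexed by the distinct values of $\{\eta_\lambda + \mu_\varrho : (\lambda, \varrho) \in \underline{k} \times \underline{l}\}$ lying between $z^{(m)}$ and $z^{(1)}$. In the symmetric case $k = l$, the number of such distinct sums ranges from $2l - 1$ (when the $\eta_i$ and $\mu_j$ lie on a common line in $\ZZ^n$, so that the sum depends only on $\lambda + \varrho$) up to $l^2$ (when all sums are pairwise distinct). After discarding the two trivially satisfied top and bottom equations and applying the reduction prescribed by Lemma \ref{lem:equationStudy} from both directions -- yielding two expressions for each $\tilde p_{\eta_i}$ whose difference produces $l-2$ polynomial constraints on the $\tilde q_{\mu_j}$ -- and substituting these expressions into the remaining equations, one is left with a polynomial system in the $\tilde q_{\mu_j}$ alone whose size lies between $2(l-1)$ and $(l-1)^2$. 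Each such polynomial equation equates two elements of $\KK[\uTheta]$, and coefficient comparison against the degree-bounded right-hand side $\tilde h_z$ -- whose multi-degree in each $\theta_t$ is also bounded by $\min\{\deg_{x_t}(h), \deg_{\partial_t}(h)\}$, by the same reasoning as in Lemma \ref{lem:degBoundTheta} -- produces exactly $\nu$ scalar equations per polynomial equation. Multiplying yields the stated bounds.

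The main obstacle is the combinatorial bookkeeping in the equation count. Precisely, one must verify that the $l - 2$ equations consumed by the top reduction scheme are disjoint (as polynomial equations) from those consumed by the bottom scheme, and that the $l - 2$ relations obtained by equating the two expressions for each $\tilde p_{\eta_i}$ are polynomially independent from the remaining substituted equations. The two extremes of the bound correspond respectively to the fully collinear and fully generic placements of the $\eta_i, \mu_j$ in $\ZZ^n$, and the most delicate verification is the generic extreme, where each of the $(l-1)^2$ resulting constraints must be shown genuinely distinct. The asymmetric case $k \neq l$ extends by an entirely analogous argument with the appropriate smallest-of-$k,l$ substitution.
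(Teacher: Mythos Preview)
Your variable count is correct and matches the paper's argument exactly.

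Your equation count, however, takes a different route from the paper and contains a genuine gap. The paper does \emph{not} obtain the range by counting the distinct values of $\eta_\lambda + \mu_\varrho$ (the $2l-1$ to $l^2$ range you describe). Instead, it fixes precisely $2(l-1)$ equations from the system~(\ref{eq:diffForSolve}) --- the $l-1$ nearest the top and the $l-1$ nearest the bottom --- and then performs backward substitution, expressing every $\tilde p_{\eta_i}$ in terms of the $\tilde q_{\mu_j}$. The crucial observation is that after this substitution each constraint involves \emph{products} of several $\tilde q_{\mu_j}$'s, and such a product has between $2\nu$ and $(l-1)\nu$ monomial terms in $\KK[\uTheta]$. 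It is this per-equation range $[2\nu,(l-1)\nu]$, multiplied by $l-1$, that yields the bounds $2(l-1)\nu$ and $(l-1)^2\nu$.

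Your argument misses this product growth entirely. You assert that each polynomial equation contributes ``exactly $\nu$ scalar equations'' by coefficient comparison against $\tilde h_z$. But once the $\tilde p_{\eta_i}$ have been replaced by polynomial expressions in the $\tilde q_{\mu_j}$, the left-hand side generically has $\uTheta$-degree exceeding the bound of Lemma~\ref{lem:degBoundTheta}, and coefficient comparison then produces strictly more than $\nu$ scalar conditions (the higher-degree coefficients must vanish). Furthermore, your attempt to arrive at the numbers $2(l-1)$ and $(l-1)^2$ from the distinct-$z$ count does not check out arithmetically: starting from $2l-1$ to $l^2$, discarding two trivial equations, adding $l-2$ constraints from equating the two expressions for each $\tilde p_{\eta_i}$, and substituting into the rest does not produce $2(l-1)$ and $(l-1)^2$. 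The actual source of the bounds is the monomial growth in the back-substituted products, not the combinatorics of the graded summands of $h$.
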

\vspace*{-5pt}
\begin{proof} The number $(l-2)\cdot \nu$ is obvious for the number of
unknowns, as we have for every polynomial $\tilde q_{\mu_i}$ for $i
\in \{2,\ldots, l-1\}$ exactly $\nu$ unknown coefficients.

  In order to obtain expressions for our unknowns, we are considering
two times $l-1$ equations of the set in (\ref{eq:diffForSolve}),
namely $l-1$ equations starting from the bottom and $l-1$ equations
starting from the top. Note here, that we also consider the equation
for $\tilde q_{\mu_l}$ when starting from the top, and the equation
for $q_{\mu_1}$ when starting from the bottom, as we obtain more
equations fulfilled by the unknown variables in this way, where part
of it is known to us.  In the backwards substitution phase, we obtain
different products of the polynomials $\tilde q_{\mu_i}$. The amount
of terms in the $\theta_j$ for $j \in \undlnn$ of those products is
greater or equal to $2\cdot\nu$ and at most $(l-1)\cdot \nu$. This
leads to the claimed bounds.
\end{proof}
\vspace*{-5pt}
\subsection{Application to Weyl Algebras with\\[2pt] Rational Coefficients}

In practice, one is often interested in differential equations 
over the field of rational functions in the indeterminates $x_i$. 
We refer to the corresponding operator algebras as the \textbf{rational Weyl
  algebras}. We have the same commutation rules there, but with
extension to the case where $x_i$ appears in the denominator. These algebras 
can be recognized as Ore localization of polynomial Weyl algebras with respect
to the multiplicatively closed Ore set $S=\KK[x_1,\ldots,x_n]\setminus\{0\}$.

Unlike in the polynomial Weyl algebra, an infinite number of
nontrivial factorizations of an element is possible. The
easiest example is the polynomial $\partial_1^2 \in A_1$, 
having nontrivial factorizations \mbox{$(\partial_1+ \frac{1}{x_1+c})(\partial_1 - \frac{1}{x_1+c})$} for all $c \in \KK$; the only polynomial factorization is $\partial\cdot
\partial$.  Thus, at first glance, the factorization problem in both
the rational and the polynomial Weyl algebras seems to be distinct in
general. But there are still many things in common.

Consider the more general case of localization of Ore algebras. In
what follows, we denote by $S \subset R$ the \textbf{denominator set}
of an arbitrary localization of a Noetherian integral domain $R$. For properties
that $S$ has to fulfil and calculation rules of elements in $S^{-1}R$
please consider \cite{Bueso:2003}, Chapter 8.
Let us clarify the connection between factorizations in $S^{-1}R$ and factorizations in
$R$. 

\vspace*{-4pt}
\begin{theorem}
\label{thrm:liftingrationalfact}
Let $h$ be an element in $S^{-1}R\setminus\{0\}$. Suppose, that 
$h = h_1 \cdots h_m$, $m \in \NN$, $h_i \in S^{-1}R$ for $i \in \underline{m}$. Then there exists $q\in S$ and $\tilde h_1, \ldots, \tilde h_m \in R$, such
that $qh = \tilde h_1\cdots \tilde h_m.$
\end{theorem}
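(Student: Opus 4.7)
The plan is to prove this by induction on $m$, clearing denominators one factor at a time from the left using the Ore property of $S$. The base case $m=1$ is immediate: any $h_1 \in S^{-1}R$ is a left fraction $s^{-1} r$ with $s \in S$ and $r \in R$, so $q := s$ and $\tilde h_1 := r$ satisfy $q h_1 = r = \tilde h_1$.

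For the inductive step, I would apply the hypothesis to the shorter product $h_2 \cdots h_m$ to obtain $q' \in S$ and $\tilde h_2, \ldots, \tilde h_m \in R$ with $q' h_2 \cdots h_m = \tilde h_2 \cdots \tilde h_m$, equivalently $h_2 \cdots h_m = (q')^{-1} \tilde h_2 \cdots \tilde h_m$. Writing $h_1 = s^{-1} r$ for some $s \in S$, $r \in R$, the product becomes
\[
h = s^{-1}\, r\, (q')^{-1}\, \tilde h_2 \cdots \tilde h_m.
\]
The key manoeuvre is now the left Ore condition applied to the pair $(r, q') \in R \times S$ (which holds in any Ore localization of a Noetherian domain as described in Chapter~8 of Bueso et al.): there exist $q'' \in S$ and $r' \in R$ with $q'' r = r' q'$, so that $r (q')^{-1} = (q'')^{-1} r'$. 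Substituting yields
\[
h = s^{-1}(q'')^{-1} r'\, \tilde h_2 \cdots \tilde h_m = (q'' s)^{-1}\, r'\, \tilde h_2 \cdots \tilde h_m,
\]
and setting $q := q'' s \in S$ and $\tilde h_1 := r' \in R$ gives $q h = \tilde h_1 \tilde h_2 \cdots \tilde h_m$, as desired.

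There is no genuinely hard step here; the content lies entirely in the Ore condition, which allows us to ``commute'' a numerator past a denominator at the cost of introducing a new common denominator on the left. The main thing to be careful about is the direction: because we want to absorb all denominators into a single prefactor $q$ on the left, the left Ore property is exactly what we need, and the induction must peel factors off starting from $h_1$ so that the new denominator $q''$ accumulates at the front rather than being trapped between $\tilde h_1$ and $\tilde h_2$. The conclusion is a clean structural statement: any factorization in $S^{-1}R$ lifts, after multiplication by a suitable denominator, to a factorization in $R$ with the same number of factors.
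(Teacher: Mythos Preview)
The paper states this theorem without proof, so there is nothing to compare against directly. Your argument is correct: the induction on $m$ together with a single application of the left Ore condition at each step is the standard way to clear denominators in an Ore localization, and your bookkeeping (in particular, that $q'' s \in S$ because $S$ is multiplicatively closed, and that $s^{-1}(q'')^{-1} = (q''s)^{-1}$) is sound.
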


\vspace*{-3pt}
Thus, by clearing denominators in an irreducible element in $S^{-1}R$ one obtains an
irreducible element in $R$. The other direction does not hold in general. 
However, one can use our algorithms in a pre-processing step of finding factorization
over $S^{-1}R$. In particular, a reducible element of $R$ is necessarily reducible
over $S^{-1}R$. 

The theorem says that we can lift any factorization from the ring
$S^{-1}R$ to a factorization in $R$ by a left multiplication with an
element of $S$. This means that in our case, where
$S=\KK[x_1,\ldots,x_n]\setminus\{0\}$, it suffices to multiply a
polynomial $h$ by a suitable element in $\KK[x_1,\ldots, x_n]$ in
order to obtain a representative of a rational factorization. Finding
this element is subject of future research. As we already have shown
in \cite{heinle2013factorization}, a polynomial factorization of an
element in $A_n$ is often more readable than the factorization
produced by rational factorization methods. Thus a pre-computation
that finds such a premultiplier so that we can just perform polynomial
factorization would be a beneficial ansatz in the rational
factorization.

\begin{example}
  Consider the polynomial
  \(h := \partial_1^3 - x_1\partial_1 -2 \in A_1.\)
  $h$ is irreducible in $A_1$, but in the first rational Weyl algebra, we obtain a factorization given by
  \(
  (\partial_1 + \frac{1}{x_1})(\partial_1^2 -\frac{1}{x_1}\partial_1 -
  x_1).
  \)
  If we multiply $h$ by $x_1$ from the left, our factorization method
  reveals two different factorizations. The first one is $x_1\cdot h$
  itself, and the second one is given by
  \(\partial_1 \cdot (x_1\partial_1^2 - x_1^2 - \partial_1),\)
  which represents the rational factorization in the sense of Theorem \ref{thrm:liftingrationalfact}.
\end{example}

\subsection{Application to Shift Algebras (with Rational Coefficients)}

  With the help of the Lemma \ref{lem:rewriteKTheta} one can see that
  $\mathcal{S}_n$ is a subalgebra of the $n$th Weyl algebra $A_n$
via the following homomorphism of $\KK$-algebras:
  $\iota: \mathcal{S}_n \to A_n, \quad x_i \mapsto \theta_i, \
  s_j \mapsto \partial_j.$

  One can easily prove that $\iota$ is, in fact, a monomorphism. This
  observation leads to the following result, which tells us that we do
  not have to consider the algebra $\mathcal{S}_n$ separately when
  dealing with factorization of its elements.

\vspace*{-6pt}
\begin{corollary}
  \label{lem:reduceShiftToWeyl}
The factorization problem for a polynomial $p \in \mathcal{S}_n$ can be
 obtained from the solution of a factorization problem of $\iota(p) \in A_n$
  by refining.
\end{corollary}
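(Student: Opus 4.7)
The plan is to exploit the fact that $\iota$ is an injective $\KK$-algebra homomorphism in order to convert $\mathcal{S}_n$-factorizations into $A_n$-factorizations whose factors all lie in the image $\iota(\mathcal{S}_n) = \KK\langle \theta_1,\ldots,\theta_n, \partial_1,\ldots,\partial_n\rangle \subset A_n$, and vice versa.

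First, I would show the \emph{forward} direction: if $p = p_1 \cdots p_m$ is a nontrivial factorization in $\mathcal{S}_n$, then applying $\iota$ yields $\iota(p) = \iota(p_1)\cdots \iota(p_m)$ in $A_n$, with each $\iota(p_j) \in \iota(\mathcal{S}_n)$. Since $\iota$ is a monomorphism (stated just before the corollary), the factorizations of $p$ correspond bijectively to factorizations of $\iota(p)$ all of whose factors lie in $\iota(\mathcal{S}_n)$, via the unique left inverse $\iota^{-1}$ restricted to the image. Thus it suffices to enumerate all $A_n$-factorizations of $\iota(p)$ and retain those whose factors lie in the image subalgebra.

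Second, I would invoke the full $A_n$-factorization procedure of Sections~\ref{sctn:factGraded} and \ref{sctn:factArb} applied to $\iota(p)$. Since $\theta_i$ and $\partial_i$ both lie in nonnegative $\ZZ^n$-graded parts under our weight choice, the image $\iota(\mathcal{S}_n)$ is a graded subalgebra of $A_n$. Hence the decomposition of $\iota(p)$ into homogeneous summands already respects the image: homogeneous pieces of $\iota(p)$ of negative-support degree vanish, and each retained graded summand is a polynomial in $\theta_i, \partial_j$. The ``refining'' step is then the following: for each candidate $A_n$-factorization $\iota(p) = f_1 \cdots f_m$ produced by the algorithm, apply the two basic operations of swapping and rewriting $\theta_i \leftrightarrow x_i\partial_i,~\theta_i+1 \leftrightarrow \partial_i x_i$ from the graded case (together with the analogous moves for arbitrary factorizations) to transform the $f_j$ into equivalent products whose individual factors lie in $\iota(\mathcal{S}_n)$; any such successful refinement is then pulled back through $\iota^{-1}$ to yield an $\mathcal{S}_n$-factorization of $p$.

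The hard part will be the refining step itself, namely proving that every $\mathcal{S}_n$-factorization of $p$ is indeed recoverable from some $A_n$-factorization of $\iota(p)$ after the allowed swap/rewrite moves. Membership in the subalgebra $\iota(\mathcal{S}_n)$ is essentially the condition that every occurrence of $x_i$ is paired (on the right) with a $\partial_i$ into a $\theta_i$-block; the forward direction guarantees at least one $A_n$-factorization already has this property after applying the rewrites, but showing that our enumeration captures \emph{all} such refinements requires invoking the classification of nontrivial factorizations up to weak similarity (Tsarev~\cite{Tsarev:1996} together with our description of the graded case). Once this correspondence is established, the corollary follows by restricting $\iota^{-1}$ to the filtered set of admissible $A_n$-factorizations.
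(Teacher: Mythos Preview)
The paper gives no proof of this corollary at all; it is stated as an immediate consequence of $\iota$ being a monomorphism of $\KK$-algebras. Your first paragraph already contains everything the paper intends: since $\iota$ is injective, factorizations of $p$ in $\mathcal{S}_n$ correspond bijectively to factorizations of $\iota(p)$ in $A_n$ whose factors all lie in the image $\iota(\mathcal{S}_n)$, and the ``refining'' is nothing more than filtering the complete list of $A_n$-factorizations (produced by the algorithm of Sections~\ref{sctn:factGraded}--\ref{sctn:factArb}) for this membership condition, possibly after regrouping adjacent irreducible $A_n$-factors whose product lies in the image (for instance, $x_i\cdot\partial_i = \theta_i = \iota(x_i)$).

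Your second and third paragraphs over-engineer the argument. In particular, the ``hard part'' you isolate is not hard: any $\mathcal{S}_n$-factorization $p=p_1\cdots p_m$ already gives, via $\iota$, an $A_n$-factorization $\iota(p)=\iota(p_1)\cdots\iota(p_m)$ with all factors in the image; refining each $\iota(p_j)$ further into $A_n$-irreducibles yields one of the factorizations the algorithm enumerates, and the original $\mathcal{S}_n$-factorization is recovered by the regrouping just described. No swap/rewrite gymnastics and no invocation of Tsarev's weak-similarity classification are required. Your observation that $\iota(\mathcal{S}_n)=\bigoplus_{z\in\NN_0^n}A_n^{(z)}$ is correct and gives a clean membership test, but this is an implementation detail rather than part of the proof.
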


\vspace*{-8pt}
Theorem
\ref{thrm:liftingrationalfact} also applies to the rational shift
algebra. Thus, the approach to lift factorizations in the shift
algebras with rational coefficients can also be applied here. The
remaining research is also here to find suitable elements in
$\KK[x_1,\ldots, x_n]$ for pre-multiplication.

\vspace*{-5pt}
\section{Implementation and Timings}
\label{sctn:implAndTime}

We have implemented the described method for $A_n$ in the computer
algebra system \textsc{Singular}. Our goal was to test the performance
of our approach and the versatility of the results in practice and
compare it to given implementations. Our implementation is in a
complete but experimental stage, and we see potential for optimization
in several areas.

The implementation extends the library \texttt{ncfactor.lib}, which
contains the functionality to factorize polynomials in the first Weyl algebra, the first shift
algebra and graded polynomials in the first $q$-Weyl algebra. The actual library
is distributed with \textsc{Singular} since version 3-1-3.

In the following examples, we consider different polynomials and
present the resulting factorizations and timings. Our function to
factorize polynomials in the $n$th Weyl algebra is written to solve
problem (ii) as given in the introduction, i.e. finding
all possible factorizations of a given polynomial. All computations
were done using \textsc{Singular} version 3-1-6. We compare our
performance and our outputs to \textsc{REDUCE} version 3.8. There, we
use the function \texttt{nc\_factorize\_all} in the library
\texttt{NCPOLY}. The calculations were run on a on a computer
with a 4-core Intel CPU (Intel\textregistered  Core\texttrademark
i7-3520M CPU with 2.90GHz, 2 physical cores, 2 hardware threads, 32K
L1[i,d], 256K L2, 4MB L3 cache) and 16GB RAM.

In order to make the tests reproducible, we used the \textsc{SDEval}
\cite{heinle2013symbolicdata} framework, created for the
\textsc{Symbolic Data} project \cite{Bachmann:2000}, for our
benchmarking. The functions of \textsc{Symbolic Data} as well as the
data are free to use. In such a way our comparison is easily reproducible by any other person.


  Our set of examples is given by
  \begin{eqnarray*}
    h_1 &:=& (\partial_1+1)^2(\partial_1 + x_1\partial_2) \in A_2,\\
    h_2 &:=& (\theta_1 \partial_2 + (\theta_1 + 3)\theta_2 +
    x_2)\cdot\\ 
    &&\hspace*{8pt}((\theta_1+4)x_1\partial_2 + x_1+(\theta_1+1)x_1x_2)\in
    A_2,
  \end{eqnarray*}
  \begin{eqnarray*}
   h_3 & := & x_1x_2^2x_3^3\partial_1\partial_2^2+x_2x_3^3\partial_2 \in
  A_3,\\
   h_4 &:=& (x_1^2\partial_1+x_1x_2\partial_2)(\partial_1\partial_2 +
  \partial_1^2\partial_2^2x_1x_2) \in A_2.
  \end{eqnarray*}
  The polynomial $h_1$ can be found in \cite{landau1902satz}, the
  polynomial $h_2$ is
  the polynomial from Example \ref{ex:finalInhomogEx} and the last two
  polynomials are graded polynomials.

  Our implementation in \textsc{Singular} managed to factor all the
  polynomials that are listed above. For $h_1$, it took 2.83s to find
  two distinct factorizations. Besides
  the given one above, we have
  $h_1 =
  (x_1\partial_1\partial_2+\partial_1^2+x_1\partial_2+\partial_1+2\partial_2)(\partial_1+1).$
  In order to factorize $h_2$, \textsc{Singular} took 23.48s to find
  three factorizations.
  For the graded polynomials $h_3$ and $h_4$, our implementation
  finished its computations as expected quickly (0.46s and 0.32s) and
  returned 60 distinct factorizations for each $h_3$ and $h_4$.

  \textsc{REDUCE} only terminated for $h_1$ (within two hours).  For $h_1$
  it returned 3 different factorizations (within 0.1s), and one of the factorizations
  contained a reducible factor. For $h_2,h_3$ and $h_4$, we cancelled
  the process after two hours.

Factoring $\ZZ$-graded polynomials in the first Weyl algebra was already
timed and compared with several implementations on various examples in
\cite{heinle2013factorization}. The comparison there also included the
functionality in the computer algebra system \textsc{Maple} for
factoring polynomials in the first Weyl algebra with rational coefficients.

The next example shows the performance of our implementation for the
first Weyl algebra.

\vspace*{-5pt}
\begin{example}
This example is taken from \cite{Koepf:1998}, page 200. We consider
\(
h := (x_1^4-1)x_1\partial_1^2 + (1+7x_1^4)\partial_1 + 8x_1^3.
\)
Our implementation takes 0.75 seconds to find 12 distinct
factorizations in the algebra $A_1$.  \textsc{Maple} 17, using
\texttt{DFactor} from the \texttt{DETools} package, takes the same
amount of time and reveals one factorization in the first Weyl algebra
with rational coefficients.  \textsc{REDUCE} outputs 60 factorizations
in $A_1$ after 3.27s.  However, these factorizations contain factorizations
with reducible factors.  After factoring such cases and removing
duplicates from the list, the number of different factorizations
reduced to 12.
\end{example}

\vspace*{-12pt}
\section{Conclusions}
\label{sctn:conclusion}
An approach to factoring polynomials in the operator algebras $A_n$,
$Q_n$ and $\mathcal{S}_n$ based on nontrivial $\ZZ^n$-gradings has
been presented, and an experimental implementation has been
evaluated. We have shown that the set of polynomials that we can
factorize using our technique in a feasible amount of time has been greatly
extended. Especially for $\ZZ^n$-graded polynomials, we have shown that the
problem of finding all nontrivial factorizations in $A_n$ resp. $Q_n$ can be reduced
to commutative factorization in multivariate rings and some basic combinatorics. 
Thus, the performance of the factorization algorithm
regarding graded polynomials is dominated by the performance of the
commutative factorization algorithm that is available.

Our future work consists of implementing the remaining functionalities
into \texttt{ncfactor.lib}. Furthermore, it would be interesting to
extend our technique to deal with the factorization problem in $A_n$
to polynomials in $Q_n$. Additionally, there exist many other
operator algebras, and it would be interesting to investigate to what
extent we can use the described methodology there.

Applying our techniques for the factorization problem in the case of
algebras with coefficients in rational functions is also interesting, albeit
more involved. Amongst other problems, in that case infinitely many different
factorizations can occur. One has to find representatives of
parametrized factorizations, and use these to obtain a factorization
in the polynomial sense. This approach could be beneficial, and it has
been developed in \cite{heinle2013factorization}.

\vspace*{-7pt}
\section{Acknowledgments}
We would like to thank to Dima Grigoriev for discussions on the
subject, and to Mark van Hoeij for his expert opinion.  We are
grateful to Wolfram Koepf and Martin Lee for providing us with
interesting examples
and to Michael Singer, Shaoshi Chen and Daniel Rettstadt
for sharing with us interesting points of view on our research problems.

We would like to express our gratitude to the German Academic Exchange 
Service
DAAD for funding our project in the context of the German-Canadian PPP program. 
\vspace*{2pt}
{\small
\bibliographystyle{abbrv}
\bibliography{facnthweyl}}

\vfill
\appendix

\section{Commutative Polynomial System of Equations}
\label{app:polySystem}
The commutative polynomial system of equations that is formed in
Example \ref{ex:finalInhomogEx} is given as follows.

\begin{eqnarray*}
&&\{-\tilde q_{\mu_2}^{(8)}, -\tilde q_{\mu_2}^{(7)}-2\tilde q_{\mu_2}^{(8)},
-\tilde q_{\mu_2}^{(6)}-\tilde q_{\mu_2}^{(7)}-\tilde q_{\mu_2}^{(8)},
-\tilde q_{\mu_2}^{(5)},
-\tilde q_{\mu_2}^{(4)}-2\tilde q_{\mu_2}^{(5)}\\
&&-4\tilde q_{\mu_2}^{(8)},
-\tilde q_{\mu_2}^{(3)}-\tilde q_{\mu_2}^{(4)}-\tilde q_{\mu_2}^{(5)}-2\tilde q_{\mu_2}^{(7)},
-\tilde q_{\mu_2}^{(2)}+4\tilde q_{\mu_2}^{(8)},
-\tilde q_{\mu_2}^{(1)}-2\tilde q_{\mu_2}^{(2)}\\
&&-4\tilde q_{\mu_2}^{(5)}+4\tilde q_{\mu_2}^{(7)}-8\tilde q_{\mu_2}^{(8)},
-\tilde q_{\mu_2}^{(0)}-\tilde q_{\mu_2}^{(1)}-\tilde
q_{\mu_2}^{(2)}-2\tilde q_{\mu_2}^{(4)}+4\tilde q_{\mu_2}^{(6)}\\
&&-4\tilde q_{\mu_2}^{(7)}+4\tilde q_{\mu_2}^{(8)}+1,
-4\tilde q_{\mu_2}^{(2)}+4\tilde q_{\mu_2}^{(4)}-8\tilde q_{\mu_2}^{(5)},
-2\tilde q_{\mu_2}^{(1)}+4\tilde q_{\mu_2}^{(3)}\\
&&-4\tilde q_{\mu_2}^{(4)}+4\tilde q_{\mu_2}^{(5)},
4\tilde q_{\mu_2}^{(2)},
4\tilde q_{\mu_2}^{(1)}-8\tilde q_{\mu_2}^{(2)},
4\tilde q_{\mu_2}^{(0)}-4\tilde q_{\mu_2}^{(1)}+4\tilde
q_{\mu_2}^{(2)}-4,\\
&&
(\tilde q_{\mu_2}^{(8)})^2,
2\tilde q_{\mu_2}^{(7)}\tilde q_{\mu_2}^{(8)}+2(\tilde q_{\mu_2}^{(8)})^2,
2\tilde q_{\mu_2}^{(6)}\tilde q_{\mu_2}^{(8)}+(\tilde
q_{\mu_2}^{(7)})^2+3\tilde q_{\mu_2}^{(7)}\tilde q_{\mu_2}^{(8)}\\
&&
+(\tilde
q_{\mu_2}^{(8)})^2,
2\tilde q_{\mu_2}^{(6)}\tilde q_{\mu_2}^{(7)}+2\tilde
q_{\mu_2}^{(6)}\tilde q_{\mu_2}^{(8)}+(\tilde q_{\mu_2}^{(7)})^2+\tilde
q_{\mu_2}^{(7)}\tilde q_{\mu_2}^{(8)},
(\tilde q_{\mu_2}^{(6)})^2\\
&&
+\tilde q_{\mu_2}^{(6)}\tilde
q_{\mu_2}^{(7)}+\tilde q_{\mu_2}^{(6)}\tilde q_{\mu_2}^{(8)},
2\tilde q_{\mu_2}^{(5)}\tilde q_{\mu_2}^{(8)},
2\tilde q_{\mu_2}^{(4)}\tilde q_{\mu_2}^{(8)}+2\tilde q_{\mu_2}^{(5)}\tilde q_{\mu_2}^{(7)}+4\tilde q_{\mu_2}^{(5)}\tilde q_{\mu_2}^{(8)}\\
&&-\tilde q_{\mu_2}^{(8)},
2\tilde q_{\mu_2}^{(3)}\tilde q_{\mu_2}^{(8)}+2\tilde q_{\mu_2}^{(4)}\tilde q_{\mu_2}^{(7)}+3\tilde
q_{\mu_2}^{(4)}\tilde q_{\mu_2}^{(8)}+2\tilde q_{\mu_2}^{(5)}\tilde
q_{\mu_2}^{(6)}+3\tilde q_{\mu_2}^{(5)}\tilde q_{\mu_2}^{(7)}\\
&&+2\tilde
q_{\mu_2}^{(5)}\tilde q_{\mu_2}^{(8)}-\tilde q_{\mu_2}^{(7)},
2\tilde q_{\mu_2}^{(3)}\tilde q_{\mu_2}^{(7)}+2\tilde
q_{\mu_2}^{(3)}\tilde q_{\mu_2}^{(8)}+2\tilde q_{\mu_2}^{(4)}\tilde
q_{\mu_2}^{(6)}+2\tilde q_{\mu_2}^{(4)}\tilde q_{\mu_2}^{(7)}\\
&&+\tilde
q_{\mu_2}^{(4)}\tilde q_{\mu_2}^{(8)}+2\tilde q_{\mu_2}^{(5)}\tilde
q_{\mu_2}^{(6)}+\tilde q_{\mu_2}^{(5)}\tilde q_{\mu_2}^{(7)}-\tilde q_{\mu_2}^{(6)},
2\tilde q_{\mu_2}^{(3)}\tilde q_{\mu_2}^{(6)}+\tilde
q_{\mu_2}^{(3)}\tilde q_{\mu_2}^{(7)}\\
&& +\tilde q_{\mu_2}^{(3)}\tilde
q_{\mu_2}^{(8)}+\tilde q_{\mu_2}^{(4)}\tilde q_{\mu_2}^{(6)}+\tilde
q_{\mu_2}^{(5)}\tilde q_{\mu_2}^{(6)},
2\tilde q_{\mu_2}^{(2)}\tilde q_{\mu_2}^{(8)}+(\tilde q_{\mu_2}^{(5)})^2,
2\tilde q_{\mu_2}^{(1)}\tilde q_{\mu_2}^{(8)}\\
&&+2\tilde
q_{\mu_2}^{(2)}\tilde q_{\mu_2}^{(7)}+4\tilde q_{\mu_2}^{(2)}\tilde
q_{\mu_2}^{(8)}+2\tilde q_{\mu_2}^{(4)}\tilde q_{\mu_2}^{(5)}+2(\tilde
q_{\mu_2}^{(5)})^2-\tilde q_{\mu_2}^{(5)}-7\tilde q_{\mu_2}^{(8)},\\
&&
2\tilde q_{\mu_2}^{(0)}\tilde q_{\mu_2}^{(8)}+2\tilde
q_{\mu_2}^{(1)}\tilde q_{\mu_2}^{(7)}+3\tilde q_{\mu_2}^{(1)}\tilde
q_{\mu_2}^{(8)}+2\tilde q_{\mu_2}^{(2)}\tilde q_{\mu_2}^{(6)}+3\tilde q_{\mu_2}^{(2)}\tilde q_{\mu_2}^{(7)}\\
&&+2\tilde
q_{\mu_2}^{(2)}\tilde q_{\mu_2}^{(8)}+2\tilde q_{\mu_2}^{(3)}\tilde
q_{\mu_2}^{(5)}+(\tilde q_{\mu_2}^{(4)})^2+3\tilde q_{\mu_2}^{(4)}\tilde
q_{\mu_2}^{(5)}-\tilde q_{\mu_2}^{(4)}+(\tilde q_{\mu_2}^{(5)})^2\\
&&-7\tilde q_{\mu_2}^{(7)}-\tilde q_{\mu_2}^{(8)},
2\tilde q_{\mu_2}^{(0)}\tilde q_{\mu_2}^{(7)}+2\tilde
q_{\mu_2}^{(0)}\tilde q_{\mu_2}^{(8)}+2\tilde q_{\mu_2}^{(1)}\tilde
q_{\mu_2}^{(6)}+2\tilde q_{\mu_2}^{(1)}\tilde q_{\mu_2}^{(7)}\\
&&+\tilde q_{\mu_2}^{(1)}\tilde q_{\mu_2}^{(8)}+2\tilde
q_{\mu_2}^{(2)}\tilde q_{\mu_2}^{(6)}+\tilde q_{\mu_2}^{(2)}\tilde
q_{\mu_2}^{(7)}+2\tilde q_{\mu_2}^{(3)}\tilde q_{\mu_2}^{(4)}+2\tilde
q_{\mu_2}^{(3)}\tilde q_{\mu_2}^{(5)}-\tilde q_{\mu_2}^{(3)}\\&&
+(\tilde q_{\mu_2}^{(4)})^2+\tilde q_{\mu_2}^{(4)}\tilde q_{\mu_2}^{(5)}-7\tilde q_{\mu_2}^{(6)}-\tilde q_{\mu_2}^{(7)},
2\tilde q_{\mu_2}^{(0)}\tilde q_{\mu_2}^{(6)}+\tilde
q_{\mu_2}^{(0)}\tilde q_{\mu_2}^{(7)}+\tilde q_{\mu_2}^{(0)}\tilde
q_{\mu_2}^{(8)}\\&&
+\tilde q_{\mu_2}^{(1)}\tilde q_{\mu_2}^{(6)}+\tilde q_{\mu_2}^{(2)}\tilde q_{\mu_2}^{(6)}+(\tilde q_{\mu_2}^{(3)})^2+\tilde q_{\mu_2}^{(3)}\tilde q_{\mu_2}^{(4)}+\tilde q_{\mu_2}^{(3)}\tilde q_{\mu_2}^{(5)}-\tilde q_{\mu_2}^{(6)},
\\&&2\tilde q_{\mu_2}^{(2)}\tilde q_{\mu_2}^{(5)},
2\tilde q_{\mu_2}^{(1)}\tilde q_{\mu_2}^{(5)}+2\tilde q_{\mu_2}^{(2)}\tilde q_{\mu_2}^{(4)}+4\tilde q_{\mu_2}^{(2)}\tilde q_{\mu_2}^{(5)}-\tilde q_{\mu_2}^{(2)}-7\tilde q_{\mu_2}^{(5)}-12\tilde q_{\mu_2}^{(8)},\\
&&2\tilde q_{\mu_2}^{(0)}\tilde q_{\mu_2}^{(5)}+2\tilde q_{\mu_2}^{(1)}\tilde q_{\mu_2}^{(4)}+3\tilde
q_{\mu_2}^{(1)}\tilde q_{\mu_2}^{(5)}-\tilde q_{\mu_2}^{(1)}+2\tilde
q_{\mu_2}^{(2)}\tilde q_{\mu_2}^{(3)}+3\tilde q_{\mu_2}^{(2)}\tilde
q_{\mu_2}^{(4)}\\
&&+2\tilde q_{\mu_2}^{(2)}\tilde q_{\mu_2}^{(5)}-7\tilde q_{\mu_2}^{(4)}-\tilde q_{\mu_2}^{(5)}-12\tilde q_{\mu_2}^{(7)},
2\tilde q_{\mu_2}^{(0)}\tilde q_{\mu_2}^{(4)}+2\tilde
q_{\mu_2}^{(0)}\tilde q_{\mu_2}^{(5)}-\tilde q_{\mu_2}^{(0)}\\
&&+2\tilde
q_{\mu_2}^{(1)}\tilde q_{\mu_2}^{(3)}+2\tilde q_{\mu_2}^{(1)}\tilde q_{\mu_2}^{(4)}+\tilde
q_{\mu_2}^{(1)}\tilde q_{\mu_2}^{(5)}+2\tilde q_{\mu_2}^{(2)}\tilde
q_{\mu_2}^{(3)}+\tilde q_{\mu_2}^{(2)}\tilde q_{\mu_2}^{(4)}-7\tilde
q_{\mu_2}^{(3)}\\&& -\tilde q_{\mu_2}^{(4)}-12\tilde q_{\mu_2}^{(6)}+1,
2\tilde q_{\mu_2}^{(0)}\tilde q_{\mu_2}^{(3)}+\tilde q_{\mu_2}^{(0)}\tilde q_{\mu_2}^{(4)}+\tilde q_{\mu_2}^{(0)}\tilde q_{\mu_2}^{(5)}+\tilde q_{\mu_2}^{(1)}\tilde q_{\mu_2}^{(3)}\\
&&+\tilde q_{\mu_2}^{(2)}\tilde q_{\mu_2}^{(3)}-\tilde q_{\mu_2}^{(3)},
(\tilde q_{\mu_2}^{(2)})^2,
2\tilde q_{\mu_2}^{(1)}\tilde q_{\mu_2}^{(2)}+2(\tilde q_{\mu_2}^{(2)})^2-7\tilde q_{\mu_2}^{(2)}-12\tilde q_{\mu_2}^{(5)},\\
&&
2\tilde q_{\mu_2}^{(0)}\tilde q_{\mu_2}^{(2)}+(\tilde
q_{\mu_2}^{(1)})^2+3\tilde q_{\mu_2}^{(1)}\tilde
q_{\mu_2}^{(2)}-7\tilde q_{\mu_2}^{(1)}+(\tilde q_{\mu_2}^{(2)})^2-\tilde q_{\mu_2}^{(2)}-12\tilde q_{\mu_2}^{(4)},\\
&& 
2\tilde q_{\mu_2}^{(0)}\tilde q_{\mu_2}^{(1)}+2\tilde
q_{\mu_2}^{(0)}\tilde q_{\mu_2}^{(2)}-7\tilde q_{\mu_2}^{(0)}+(\tilde
q_{\mu_2}^{(1)})^2+\tilde q_{\mu_2}^{(1)}\tilde q_{\mu_2}^{(2)}-\tilde q_{\mu_2}^{(1)}-12\tilde q_{\mu_2}^{(3)}\\
&&+7,
(\tilde q_{\mu_2}^{(0)})^2+\tilde q_{\mu_2}^{(0)}\tilde q_{\mu_2}^{(1)}+\tilde q_{\mu_2}^{(0)}\tilde q_{\mu_2}^{(2)}-\tilde q_{\mu_2}^{(0)},
-12\tilde q_{\mu_2}^{(1)},
-12\tilde q_{\mu_2}^{(0)}+12\}.
\end{eqnarray*}

\end{document}